\def\minwrt[#1]{\underset{#1}{\text{minimize }}}
\def\argminwrt[#1]{\underset{#1}{\text{arg min }}}
\def\maxwrt[#1]{\underset{#1}{\text{maximize }}}
\def\maxemphwrt[#1]{\underset{#1}{\text{\emph{maximize} }}}
\newtheorem{remark}{Remark}
\newtheorem{proposition}{Proposition}
\newtheorem{lemma}{Lemma}
\newtheorem{definition}{Definition}
\newcommand{\norm}[1]{\left\lVert#1\right\rVert}
\newcommand{\abs}[1]{\left|#1\right|}
\def\bx{{\bf x}}
\def\by{{\bf y}}
\def\ccM{{\mathcal{M}}}
\def\ccI{{\mathcal{I}}}
\def\RC{{\mathbb{C}}}
\def\RN{{\mathbb{N}}}
\def\RR{{\mathbb{R}}}
\def\RT{{\mathbb{T}}}
\newcommand{\bomega}{\boldsymbol{\omega}}
\newcommand{\Omtspec}{S}
\newcommand{\costfunc}{c}
\newcommand{\expop}{\mathbb{E}}
\newcommand{\harmspec}{\Phi}
\newcommand{\misparam}{\theta}
\newcommand{\inharm}{\Delta} 
\newcommand{\missignal}{\mu} 
\newcommand{\signal}{x} 
\newcommand{\misnoise}{w} 
\newcommand{\noise}{e} 
\newcommand{\wavediff}{\xi} 
\newcommand{\pitch}{\omega} 
\newcommand{\truefreq}{\tilde{\omega}} 
\newcommand{\amp}{r} 
\newcommand{\trueamp}{\tilde{r}} 
\newcommand{\phase}{\phi} 
\newcommand{\truephase}{\tilde{\phi}} 
\newcommand{\var}{\sigma^2} 
\newcommand{\truevar}{\tilde{\sigma}^2} 
\newcommand{\misscale}{\alpha}
\newcommand{\phasediff}{\breve{\phi}}
\newcommand{\freqdiff}{\breve{\omega}}
\newcommand{\fim}{F}
\newcommand{\extrahessian}{\tilde{F}}
\newcommand{\real}{\mathfrak{Re}}
\newcommand{\imag}{\mathfrak{Im}}
\newcommand{\rreal}{\mathfrak{R}}
\newcommand{\iimag}{\mathfrak{I}}
\newcommand{\firstelement}{\eta}
\newcommand{\mcrlb}{\mathrm{MCRLB}}
\newcommand{\mapcriterion}{\psi_{\mathrm{ML/MAP}}}
\newcommand{\inharmvar}{\sigma^2_\Delta}
\newcommand{\paraminharm}{\breve{\theta}}
\def\argmaxwrt[#1]{\underset{#1}{\text{arg max }}}
\newcommand{\harmsignal}{\mu} 
        \def\fps@eqnfloat{!t}
        \def\ftype@eqnfloat{4}
        \newenvironment{eqnfloat*}
               {\@dblfloat{eqnfloat}}
               {\end@dblfloat}
\title{Defining Fundamental Frequency for \\Almost Harmonic Signals}
\author{Filip Elvander and  Andreas Jakobsson
\thanks{F.~Elvander is with the Stadius Center for Dynamical Systems, Signal Processing and Data Analytics, KU Leuven, 3001 Leuven, Belgium, (email:~firstname.lastname@esat.kuleuven.be).

A.~Jakobsson is with the Department of Mathematical Statistics, Center for Mathematical Sciences,
 Lund University, SE-22100 Lund, Sweden, \mbox{(email:~firstname.lastname@matstat.lu.se).}}\thanks{\noindent Parts of the material herein has been published in the proceedings of ICASSP~2020.}}
\begin{document}
\maketitle

\begin{abstract}
In this work, we consider the modeling of signals that are almost, but not quite, harmonic, i.e., composed of sinusoids whose frequencies are close to being integer multiples of a common frequency. Typically, in applications, such signals are treated as perfectly harmonic, allowing for the estimation of their fundamental frequency, despite the signals not actually being periodic. Herein, we provide three different definitions of a concept of fundamental frequency for such inharmonic signals and study the implications of the different choices for modeling and estimation. We show that one of the definitions corresponds to a misspecified modeling scenario, and provides a theoretical benchmark for analyzing the behavior of estimators derived under a perfectly harmonic assumption. The second definition stems from optimal mass transport theory and yields a robust and easily interpretable concept of fundamental frequency based on the signals' spectral properties. The third definition interprets the inharmonic signal as an observation of a randomly perturbed harmonic signal. This allows for computing a hybrid information theoretical bound on estimation performance, as well as for finding an estimator attaining the bound. The theoretical findings are illustrated using numerical examples.
\end{abstract}
{\bf Keywords}
Inharmonicity, fundamental frequency estimation, misspecified models, optimal mass transport

\section{Introduction}
Signals that may be well modeled as superpositions of harmonically related sinusoids appear in a wide variety of fields. Such signals, often referred to as pitches, are, for instance, used in speech processing for modeling the voiced part of human speech \cite{NorholmJC16_24}, in music information retrieval for extracting musical melodies \cite{MullerEKR11_5}, for monitoring and fault detection of industrial machinery \cite{Randall11}, or assessing the state of human diseases \cite{LittleMHSR09_56}. In such applications, the most commonly employed signal feature is the fundamental frequency, or pitch, corresponding to the reciprocal of the signal period. Due to this, considerable effort has been directed towards developing estimators that are statistically, as well as computationally, efficient \cite{NielsenJJCJ17_135}, with more recent contributions extending the estimation problem to signals containing multiple harmonic structures (see, e.g., \cite{ChristensenJ09}). However, in some cases, the assumed harmonic structure is only approximate, i.e., there exists no fundamental frequency such that the frequencies of the sinusoidal are all integer multiples of it, or, equivalently, the signal is not periodic. Such signals are referred to as inharmonic and may be found, e.g., in the sound produced by stringed musical instruments \cite{Fletcher88R} and to some extent the human voice \cite{GeorgeS97_5}. In the former case, there exist parametric models for some instruments based on their physical properties, describing the precise deviation from a perfect harmonic structure \cite{Fletcher88R}. This allows for the formulation of efficient estimators of the signal parameters \cite{ZhangCJM10_18,ButtASJ13_icassp}, as well for deriving information theoretical bounds on estimation performance, such as the Cram\'er-Rao lower bound (CRLB) \cite{Kay93}. However, when no apparent structure of the inharmonicity is known, it is less clear how to efficiently estimate the frequency content of such signals. Intuitively, one should be able to achieve better estimation performance than what is possible under the assumption of an unstructured model, i.e., when there is no relation between the frequencies of the sinusoidal components. Although one may use robust methods \cite{ZhangCJM10_18} or simply resort to applying estimators derived for the perfectly harmonic case, it is not clear what the relevant bounds on estimation performance are. In fact, when applying a harmonic estimator to an inharmonic signal, it is seldom stated what quantity one actually aims to estimate. This is the question this work aims to answer: what does the concept of a pitch mean for inharmonic signals? We provide three possible definitions, emanating from three different views of the inharmonic signal, and study the implications of the different choices.

Firstly, we propose to define the pitch of inharmonic signals via the best $\ell_2$ approximation. We show that for the case of Gaussian additive noise, this corresponds to the best approximation in the Kullback-Leibler sense, allowing for utilizing the framework of misspecified estimation and for interpreting the pitch definition as a pseudo-true fundamental frequency (see, e.g., \cite{FortunatiGGR17_34}). Furthermore, this allows us to use the misspecified CRLB (MCRLB) \cite{RichmondH15_63} as a bound on estimation performance, which we derive for the inharmonic model. This bound is asymptotically tight and attained by the misspecified maximum likelihood estimator (MLE). Thus, this definition formalizes the inherent assumptions behind applying harmonic estimators to inharmonic signals. However, although practically useful, we show that this definition is problematic for signals with a long time duration, as well as depends on parameters that may be considered nuisance.

As an alternative, we consider approximating the spectral representation of infinite-length versions of the inharmonic signal. Building on the concept of optimal mass transport (OMT) (see, e.g., \cite{Villani08}), we consider the harmonic spectrum closest in the OMT sense, yielding a definition of the fundamental frequency. We show that this definition has some attractive properties, such as stability to small perturbations, as well as an intuitive appeal. Furthermore, when using the definition as a basis for estimating the pitch, the resulting estimator allows for a closed-form expression for the asymptotic variance, and, in the case of perfect harmonicity and white Gaussian noise, has the same asymptotical performance as the MLE. In this case, the resulting estimator corresponds asymptotically to an estimator formed using the extended invariance principle (EXIP) \cite{Stoica1989}, fitting a set of unstructured frequency estimates to a perfectly harmonic structure. The EXIP concept extends the invariance principle for ML estimation to the case when the variable transformation is not bijective and has earlier been successfully applied in, e.g., array processing problems in the presence of calibration errors \cite{SwindleS98_86}, as well as for pitch estimation \cite{LiSL00_80} under the assumption of perfect harmonicity.

Lastly, we consider modeling inharmonic signals within a stochastic framework, wherein the noiseless signal is regarded as an observation of a stochastic process. Specifically, we model deviations from the perfectly harmonic model as zero-mean random variables, allowing for interpreting the pitch as an expectation. As the resulting model contains both deterministic and random parameters, we derive the hybrid CRLB (HCRLB) as a lower bound on estimation mean squared error (MSE). We also derive an easily implementable hybrid ML/maximum a posteriori (MAP) estimator that asymptotically attains the resulting HCRLB.

We compare and contrast the three definitions with each other, highlighting their relative merits and applicability, as well as provide numerical illustrations of the derived estimators and proposed bounds.
%
\section{Almost harmonic signals} \label{sec:almost_harm_signals}
Consider the noiseless signal model\footnote{In the interest of generality, we here consider complex-valued signals. For real-valued signals, a corresponding complex version may easily be formed as the discrete-time analytical signal \cite{Marple99_47}.}
\begin{align}\label{eq:noiseless_sine_model}
	\signal_t =  \sum_{k=1}^K \trueamp_k e^{i\truephase_k + i\truefreq_k t},
\end{align}
for $t = 0,1,\ldots,N-1$, and $N\in \RN$, where $\trueamp_k>0$, $\truephase_k \in [-\pi,\pi)$, and $\truefreq_k \in [-\pi,\pi)$ denote the amplitude, initial phase, and frequency, respectively, for the $k$:th signal component. Then, if there exists $\pitch_0 \in [-\pi,\pi)$ such that $\truefreq_k = k\pitch_0$, for $k = 1,\ldots,K$, the signal in \eqref{eq:noiseless_sine_model} is referred to as being harmonic, with fundamental frequency, or pitch, $\pitch_0$. Note that this is the case also if some of the components are missing, i.e., $\trueamp_k = 0$ for some $k$, as the signal period is still $2\pi/\pitch_0$. The model in \eqref{eq:noiseless_sine_model} appears in many signal processing applications, not least in audio processing, and considerable effort has been directed towards deriving estimators for the parameters $\pitch_0$ and $(\trueamp_k,\truephase_k)$, for $k=1,\ldots,K$ (see, e.g., \cite{ChristensenJ09} for an overview). However, for inharmonic signals, the integer relationship between the component frequencies is only approximate \cite{Klapuri03_11,GeorgeS97_5}. That is,
\begin{align} \label{eq:inharm_freqs}
	\truefreq_k = k\pitch_0 + \inharm_k, \; k = 1,\ldots,K,
\end{align}
where $\inharm_k$ are called inharmonicity parameters. As to distinguish from the case with completely unrelated sinusoidal components, it is here assumed that the inharmonicity parameters are small in the sense that \mbox{$\abs{\inharm_k} \ll \pitch_0$}. Thus, assuming that the component frequencies satisfy \eqref{eq:inharm_freqs} constitutes a type of middle ground between the highly structured harmonic model, where $\inharm_k = 0$, for all $k$, and the unstructured sinusoidal model, where there is no relation between the frequency components. In some special cases, parametric models for the inharmonicity exists. For example, a common model for vibrating strings is
\begin{align} \label{eq:string_model}
	\truefreq_k = k\omega_0 \sqrt{1+k^2\beta}
\end{align}
where $\beta > 0 $ is a parameter related to the stiffness of the string \cite{Fletcher88R}. Thus, for this model, the inharmonicity parameters are given by $\inharm_k = k\omega_0 \left(\sqrt{1+k^2\beta}-1\right) > 0$, for all $k$. However, such a structured model may not be assumed in the general case, and, in this work, we therefore do not assume any particular structure of $\left\{ \inharm_k \right\}_{k=1}^K$, or indeed that any useful such structure exists. Rather, we aim to put the intuitive concept of inharmonic, i.e., almost harmonic, signals on more solid foundation by offering three conceptually different definitions of the meaning of a fundamental frequency for a non-periodic signal. Common for the three definitions will be the existence of a perfectly harmonic waveform, 
\begin{align} \label{eq:harm_waveform}
	\harmsignal_t = \sum_{\ell=1}^L \amp_\ell e^{i\phase_\ell +i \pitch_0\ell t},
\end{align}
with $L$ not necessarily being equal to $K$, that is a best approximation of $x_t$, with the definition of optimality differing between the definitions. In this framework, $\pitch_0$ will be the definition of the pitch for an inharmonic signal $\signal_t$. Furthermore, in order to find bounds on estimation performance, as well as to formulate estimators, we assume that the measured signal is well modeled as
\begin{align} \label{eq:measurement_equation}
	y_t = \signal_t +  \noise_t,
\end{align}
where $\noise_t$ is a circularly symmetric white Gaussian noise with variance $\truevar$. However, most results presented herein may be readily extended to non-white noise processes. Throughout, let
\begin{align*}
	\misparam = \left[\begin{array}{ccccccc} \pitch_0 & \phase_1 & \ldots & \phase_L & \amp_1 & \ldots & \amp_L\end{array}\right]^T,
\end{align*}
denote the parameter vector defining the approximating signal, i.e., $\harmsignal_t \equiv \harmsignal_t(\misparam)$, and let $\by = \left[\begin{array}{ccccccc} y_0 & \ldots & y_{N-1} \end{array}\right]^T$ be the vector of available signal samples. For ease of reference, we recall that the probability density function (pdf) of $\by$ is thus
\begin{align*}
	p_{Y}(\by) = \frac{1}{(\pi \truevar)^N}\exp\left( -\frac{1}{\truevar} \sum_{t=0}^{N-1}\abs{y_t - x_t}^2 \right).
\end{align*}
%
%
\section{$\ell_2$ optimality and misspecified models} \label{sec:l2_def}
We initially consider approximating the inharmonic signal in~\eqref{eq:noiseless_sine_model} in the $\ell_2$ sense, i.e.,
\begin{align} \label{eq:ls_approximation}
	\misparam_0 = \argminwrt[\misparam] \frac{1}{N}\sum_{t=0}^{N-1} \abs{x_t-\harmsignal_t(\misparam)}^2.
\end{align}
That is, the approximating pitch is the harmonic waveform yielding the least squared deviation from the inharmonic signal. For notational brevity, $L = K$ for the $\ell_2$ approximation. Approximations such as \eqref{eq:ls_approximation} has earlier been applied in speech coding applications for decreasing the data rate in speech analysis/synthesis systems \cite{McAulayQ90_icassp}.
In addition to the intuitive appeal of this choice, as well as the tractability of computing $\misparam_0$, this approximation may be interpreted as a so-called pseudo-true parameter within the framework of misspecified models. Specifically, consider a scenario in which it is (erroneously) believed that the signal samples $y_t$ are perfectly harmonic, i.e., generated as $y_t = \harmsignal_t + \misnoise_t$, where $\misnoise_t$ is a circularly symmetric white Gaussian noise with (unknown) variance $\var$. It may here be noted that the noise processes $\noise_t$ and $\misnoise_t$ are not necessarily equal in distribution as it is allowed that $\var \neq \truevar$. With this, one may consider the following definition.
%
%
\begin{definition}[Pseudo-true parameter\cite{FortunatiGGR17_34}] 
	Consider a signal sample $\by$. For a pdf $p$, parametrized by the parameter vector $\misparam$, the pseudo-true parameter, $\theta_0$, is defined as
	\begin{align} \label{eq:pseudo_true}
		\misparam_0 = \argminwrt[\misparam] -\mathbb{E}_{\by}\left( \log {p}(\by;\misparam)  \right),
	\end{align}
where $\mathbb{E}_{\by}$ denotes expectation with respect to the pdf of $\by$.
\end{definition}
%
As may be noted, the minimization criterion in \eqref{eq:pseudo_true} is, up to an additive constant not depending on the parameter $\misparam$, equal to the Kullback-Leibler divergence between the pdf of the assumed model and the actual pdf of the signal sample. That is, the pseudo-true parameter $\misparam_0$ realizes the best Kullback-Leibler approximation of the pdf of the signal within the parametric family $p$. In our case, the following proposition holds.
%
%
\begin{proposition}[Pseudo-true parameter] \label{prop:pseudo_true_param}
Under the Gaussian assumption, the pseudo-true parameter $\theta_0$ for the harmonic model is given by \eqref{eq:ls_approximation}, and the pseudo-true variance parameter for the additive noise is
\begin{align*}
	\var = \truevar + \frac{1}{N} \sum_{t=0}^{N-1} \abs{\wavediff_t(\misparam_0)}^2,
\end{align*}
where $\wavediff_t(\misparam) \triangleq \missignal_t(\misparam) - x_t$.
\end{proposition}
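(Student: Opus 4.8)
The plan is to evaluate the criterion \eqref{eq:pseudo_true} explicitly for the assumed (misspecified) harmonic Gaussian model, whose free parameters are $\misparam$ together with the noise variance $\var$, and then to minimize over the full pair $(\misparam,\var)$. Writing out the assumed pdf $p(\by;\misparam,\var)$, the negative log-likelihood is
\begin{align*}
	-\log p(\by;\misparam,\var) = N\log(\pi\var) + \frac{1}{\var}\sum_{t=0}^{N-1}\abs{\noiseysignal_t - \harmsignal_t(\misparam)}^2 .
\end{align*}
First I would take the expectation $\expop_\by$ with respect to the \emph{true} data distribution, i.e., with $\noiseysignal_t = \signal_t + \noise_t$ where the $\noise_t$ are independent, zero-mean, and satisfy $\expop\abs{\noise_t}^2 = \truevar$.

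The key step is the per-sample expectation. Writing $\noiseysignal_t - \harmsignal_t(\misparam) = \noise_t - \wavediff_t(\misparam)$, expanding the squared modulus, and using $\expop[\noise_t]=0$ so that the two cross terms vanish, one obtains
\begin{align*}
	\expop_\by\abs{\noiseysignal_t - \harmsignal_t(\misparam)}^2 = \truevar + \abs{\wavediff_t(\misparam)}^2 .
\end{align*}
Summing over $t$ then gives the averaged objective
\begin{align*}
	-\expop_\by\!\left( \log p(\by;\misparam,\var) \right) = N\log(\pi\var) + \frac{N\truevar}{\var} + \frac{1}{\var}\sum_{t=0}^{N-1}\abs{\wavediff_t(\misparam)}^2 .
\end{align*}

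From here the minimization decouples. For every fixed $\var>0$ the dependence on $\misparam$ sits entirely in the last term, and since $1/\var>0$ the minimizing $\misparam$ does not depend on $\var$ and coincides with the minimizer of $\sum_{t}\abs{\wavediff_t(\misparam)}^2 = \sum_t\abs{\harmsignal_t(\misparam)-\signal_t}^2$, which is precisely the $\ell_2$ approximation $\misparam_0$ of \eqref{eq:ls_approximation} (the factor $1/N$ in \eqref{eq:ls_approximation} being irrelevant to the minimization). This establishes the first claim. Substituting $\misparam=\misparam_0$ leaves a scalar function of $\var$ of the form $N\log\var + (N\truevar+D)/\var$ with $D \triangleq \sum_t\abs{\wavediff_t(\misparam_0)}^2 \ge 0$; differentiating in $\var$ and setting the derivative to zero gives $N/\var = (N\truevar+D)/\var^2$, i.e., $\var = \truevar + D/N$, which is the asserted variance.

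I do not expect a genuine obstacle, since the computation is routine; the two points deserving care are (i) justifying that the joint minimization over $(\misparam,\var)$ truly decouples, so that $\misparam_0$ can be characterized independently of the unknown $\var$, and (ii) confirming that the stationary point in $\var$ is the global minimizer. For (ii) it suffices to note that $N\log\var + (N\truevar+D)/\var$ tends to $+\infty$ both as $\var\to 0^+$ and as $\var\to\infty$ and has a single stationary point on $\var>0$, so that stationary point is necessarily the global minimum.
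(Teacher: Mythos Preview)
Your argument is correct and is precisely the explicit computation underlying the paper's one-line proof, which simply notes that since both the assumed and true distributions are circularly symmetric white Gaussian the result follows directly. You have supplied the details the paper omits, and there is nothing to correct.
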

\begin{proof}
As both the assumed and true distributions are circularly symmetric white Gaussian, the result follows directly.
\end{proof}
%
%
%
%
From this, we may conclude that approximating the inharmonic signal in $\ell_2$ may be interpreted as finding the Gaussian pdf with mean identical to a periodic waveform that best approximates the true signal pdf in the Kullback-Leibler sense. It may here be noted that the variance $\var$ of the noise $\misnoise_t$ in the harmonic model is potentially greater than that of the inharmonic model due to the imperfect fit of the waveforms. However, as may be readily verified, the value of the pseudo-true parameter $\misparam_0$ does not depend on neither $\var$ nor $\truevar$. Furthermore, it can be shown that the misspecified MLE (MMLE), i.e., the MLE derived under an assumed model different from that of the actual measurements, asymptotically tends to the pseudo-true parameter, i.e., $\hat{\misparam}_{MMLE} \to \misparam_0$ almost surely as the signal to noise ratio (SNR), or number of signal samples, $N$, tends to infinity \cite{FortunatiGGR17_34}. This leads to a very practical consequence: the harmonic waveform $\harmsignal_t$ resulting from the $\ell_2$ approximation in \eqref{eq:ls_approximation} corresponds to the expected result when applying an harmonic MLE, or approximations thereof \cite{ChristensenJ09}, to inharmonic measurements. We summarize this in the following definition.
\begin{definition} \label{def:l2_pitch}
	Let $x_t$, for $t = 0,\ldots,N-1$, be an inharmonic waveform. Then, the best harmonic approximation in $\ell_2$ is given by $\harmsignal_t(\misparam_0)$, where $\misparam_0$ solves \eqref{eq:ls_approximation}.
\end{definition}
The harmonic signal in Def.~\ref{def:l2_pitch} may be seen as the quantity being (tacitly) estimated when applying estimators derived under an harmonic assumption to inharmonic signals. 
Furthermore, by considering the interpretation as the pseudo-true parameter, one may find a bound on performance on any unbiased estimator of $\misparam_0$. Such a family of bounds is the misspecified CRLB (MCRLB) \cite{RichmondH15_63}. Specifically, considering estimators that satisfy the MLE unbiasedness conditions, the following proposition, adapted from \cite{RichmondH15_63}, yields a bound on estimator variance.
%
%
%
\begin{proposition} \label{th:mrclb}
Let $\hat{\misparam}$ be an estimator of $\misparam_0$ that is unbiased under the signal pdf. Then,
\begin{align} \label{eq:mrclb}
	\expop_\by\left( (\hat{\misparam}-\misparam_0)(\hat{\misparam}-\misparam_0)^T  \right) \succeq A(\misparam_0)^{-1}\fim(\misparam_0) A(\misparam_0)^{-1}
\end{align}
where $A(\misparam) = -\frac{\var}{\truevar}\fim(\misparam) - \extrahessian(\misparam)$ and
\begin{align*}
	\fim(\misparam) \!&=\!\frac{2\truevar}{(\var)^2}\!\sum_{t=0}^{N-1}\!\nabla_\misparam \missignal_t^\rreal(\misparam) \nabla_\misparam \missignal_t^\rreal(\misparam)^T\!+\!\nabla_\misparam \missignal_t^\iimag(\misparam) \nabla_\misparam \missignal_t^\iimag(\misparam)^T,\\
	\extrahessian(\misparam)\!&= \!\frac{2}{\var}\sum_{t=0}^{N-1} \left(\wavediff_t^\rreal(\misparam) \nabla^2_\misparam \missignal_t^\rreal(\misparam)  + \wavediff_t^\iimag(\misparam) \nabla^2_\misparam \missignal_t^\iimag(\misparam) \right).
\end{align*}

%
Here, $(\cdot)^\rreal = \real(\cdot)$ and $(\cdot)^\iimag = \imag(\cdot)$ denote the real and imaginary parts, respectively. 
\end{proposition}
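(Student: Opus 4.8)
The plan is to specialize the general misspecified CRLB of \cite{RichmondH15_63} to the Gaussian harmonic model and to show that the two matrices appearing in its sandwich form collapse to the stated $A(\misparam_0)$ and $\fim(\misparam_0)$. That general result asserts that, for any estimator satisfying the (misspecified) MLE unbiasedness conditions, $\expop_\by((\hat{\misparam}-\misparam_0)(\hat{\misparam}-\misparam_0)^T) \succeq A^{-1} B A^{-1}$, where, evaluated at the pseudo-true parameter $\misparam_0$ and with all expectations taken under the true signal pdf, $A = \expop_\by(\nabla^2_\misparam \log \pdf(\by;\misparam))$ is the expected Hessian of the assumed log-likelihood and $B = \expop_\by(\nabla_\misparam \log \pdf(\by;\misparam)\, \nabla_\misparam \log \pdf(\by;\misparam)^T)$ is the expected outer product of its score. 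The sandwich structure and the regularity conditions are thus inherited directly from \cite{RichmondH15_63}, and the remaining work is to evaluate these two expectations in closed form and match them to the claimed expressions.

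I would start from the assumed harmonic log-likelihood $\log \pdf(\by;\misparam) = -N\log(\pi\var) - \frac{1}{\var}\sum_{t=0}^{N-1}((\noiseysignal_t^\rreal - \missignal_t^\rreal(\misparam))^2 + (\noiseysignal_t^\iimag - \missignal_t^\iimag(\misparam))^2)$, splitting into real and imaginary parts so that all derivatives are real-valued. Differentiating once gives the score $\frac{2}{\var}\sum_t((\noiseysignal_t^\rreal-\missignal_t^\rreal)\nabla_\misparam\missignal_t^\rreal + (\noiseysignal_t^\iimag-\missignal_t^\iimag)\nabla_\misparam\missignal_t^\iimag)$, and differentiating again produces a curvature term $-\frac{2}{\var}\sum_t(\nabla_\misparam\missignal_t^\rreal(\nabla_\misparam\missignal_t^\rreal)^T + \nabla_\misparam\missignal_t^\iimag(\nabla_\misparam\missignal_t^\iimag)^T)$ together with a residual-weighted term containing $\nabla^2_\misparam\missignal_t^\rreal$ and $\nabla^2_\misparam\missignal_t^\iimag$. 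Taking the expectation under $\by = \signal + \noise$ and using $\expop_\by(\noiseysignal_t-\missignal_t) = \signal_t - \missignal_t(\misparam) = -\wavediff_t(\misparam)$, the residual-weighted term becomes exactly $-\extrahessian(\misparam)$, while the curvature term, upon inserting the definition of $\fim$, equals $-\frac{\var}{\truevar}\fim(\misparam)$. This yields $A(\misparam_0) = -\frac{\var}{\truevar}\fim(\misparam_0) - \extrahessian(\misparam_0)$, as claimed.

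For $B$ I would split the expected score outer product into a noise-driven part and a mean part by writing $\noiseysignal_t^\rreal-\missignal_t^\rreal = -\wavediff_t^\rreal + \noise_t^\rreal$ (and similarly for the imaginary part). Using that the circularly symmetric white noise satisfies $\expop_\by(\noise_t^\rreal\noise_s^\rreal) = \expop_\by(\noise_t^\iimag\noise_s^\iimag) = \frac{\truevar}{2}\delta_{ts}$ with uncorrelated real and imaginary parts, the noise-driven part collapses to $\frac{2\truevar}{(\var)^2}\sum_t(\nabla_\misparam\missignal_t^\rreal(\nabla_\misparam\missignal_t^\rreal)^T + \nabla_\misparam\missignal_t^\iimag(\nabla_\misparam\missignal_t^\iimag)^T) = \fim(\misparam_0)$. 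The mean part is the outer product of the expected score $\expop_\by(\nabla_\misparam\log\pdf)|_{\misparam_0} = -\frac{2}{\var}\sum_t(\wavediff_t^\rreal\nabla_\misparam\missignal_t^\rreal + \wavediff_t^\iimag\nabla_\misparam\missignal_t^\iimag)$, and the crux of the argument is that this vanishes: since $\misparam_0$ solves the $\ell_2$ problem \eqref{eq:ls_approximation} by Prop.~\ref{prop:pseudo_true_param}, its first-order optimality condition reads $\nabla_\misparam\sum_t\abs{\wavediff_t(\misparam)}^2 = 2\sum_t(\wavediff_t^\rreal\nabla_\misparam\missignal_t^\rreal + \wavediff_t^\iimag\nabla_\misparam\missignal_t^\iimag) = 0$ at $\misparam_0$. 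Hence $B(\misparam_0) = \fim(\misparam_0)$, and substituting $A(\misparam_0)$ and $B(\misparam_0)$ into the sandwich gives \eqref{eq:mrclb}. The main obstacle I anticipate is purely the bookkeeping of the real/imaginary split and the noise second moments; the conceptual step that makes the bound take the claimed form is the vanishing of the mean score at the pseudo-true parameter.
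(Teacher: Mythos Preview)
Your proposal is correct and follows the same route as the paper, which simply states that the result follows directly from \cite{RichmondH15_63}; you have supplied the explicit specialization that the paper leaves implicit, namely the evaluation of the expected Hessian and score outer product under the true pdf and the use of the first-order optimality of \eqref{eq:ls_approximation} to kill the mean-score term. Your derivation is sound and the bookkeeping is accurate, so there is nothing to add beyond noting that the paper's own proof is a one-line citation whereas yours carries out the computation in full.
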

\begin{proof}
The result follows directly from \cite{RichmondH15_63}.
\end{proof}
%
%
%
%
The MCRLB is given by the diagonal of the right-hand side of \eqref{eq:mrclb} and thus provides a lower bound on the variance of any estimator of $\theta_0$ that is unbiased under the pdf of the measured signal. It may be noted that when $\inharm_k = 0 $, for all $k$, i.e., when the signal $\signal_t$ is perfectly harmonic, $\extrahessian(\misparam_0) = 0$, and $\fim(\misparam_0)$ is the standard Fisher information matrix (FIM). In this case, the MCRLB coincides with the CRLB of a harmonic signal. Thus, the MCRLB provides a means of assessing the performance of any estimator derived under the harmonic assumption, even when the observed signal is inharmonic. Further, as $N \to \infty$, one may express the MCRLB corresponding to the pseudo-true fundamental frequency in closed form, 
as detailed below.
%
\begin{proposition}[Asymptotic MCRLB] \label{prop:asymp_mcrlb}
Let the pseudo-true parameter be
\begin{align*}
	\misparam_0 = \left[\begin{array}{ccccccc} \pitch_0 & \phase_1 & \ldots & \phase_K & \amp_1 & \ldots & \amp_K\end{array}\right]^T.
\end{align*}
Then, as $N\to \infty$, the asymptotic $\mcrlb$ for the pseudo-true fundamental frequency $\pitch_0$ is given by
\begin{align}
	\mcrlb(\pitch_0) = \truevar \frac{C + E}{\left(C-E + Z + D   \right)^2} + \mathcal{O}(N^{-4}),
\end{align}
where $C = \frac{N(N^2-1)\sum_{k=1}^K k^2 \amp_k^2}{6}$, and
\begin{align*}
	Z &= -2 \sum_{k=1}^K k^2\amp_k^2 \frac{N(N-1)(2N-1)}{6} \\
	&\quad + 2 \sum_{k=1}^K \sum_{t=0}^{N-1} k^2\amp_k\trueamp_k t^2 \cos(\phasediff_k+\freqdiff_kt) \\
	\!D\!&=\!2(N\!-\!1)\!\Bigg[\!\frac{N(\!N\!-\!1\!)}{2}\!\sum_{k=1}^K\!k^2\amp_k^2\!-\!\sum_{k=1}^K\!\sum_{t=0}^{N-1} \!k^2\amp_k\trueamp_kt\cos(\phasediff_k\!+\!\freqdiff_k t)\!\Bigg]\\
	E &= \frac{2}{N}\sum_{k=1}^K k^2 \trueamp_k^2 \left( \sum_{t=0}^{N-1} t \sin(\phasediff_k+\freqdiff_kt) \right)^2 \\ 
	&\quad+ \frac{2}{N}\sum_{k=1}^K k^2 \left( \trueamp_k \sum_{t=0}^{N-1} t \cos(\phasediff_k+\freqdiff_kt)  - \amp_k\frac{N(N-1)}{2}\right)^2
\end{align*}
where $\phasediff_k = \phase_k - \truephase_k$ and $\freqdiff_k = k\pitch_0 - \truefreq_k$, for $k = 1,\ldots,K$.
Here, $\mathcal{O}(N^{-4})$ denotes the order of the error term resulting from the asymptotic approximation.
\end{proposition}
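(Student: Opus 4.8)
The plan is to extract the $(1,1)$ (pitch) entry of the matrix bound in Proposition~\ref{th:mrclb} and expand it in $N$, discarding terms of relative order $\mathcal{O}(N^{-4})$. I would begin by recording the elementary derivatives of the real and imaginary parts of $\missignal_t(\misparam) = \sum_{\ell=1}^K\amp_\ell e^{i\phase_\ell + i\pitch_0\ell t}$, namely $\partial\missignal_t^\rreal/\partial\pitch_0 = -\sum_\ell \amp_\ell\ell t\sin(\phase_\ell+\pitch_0\ell t)$, $\partial\missignal_t^\rreal/\partial\phase_k = -\amp_k\sin(\phase_k+\pitch_0 k t)$, $\partial\missignal_t^\rreal/\partial\amp_k = \cos(\phase_k+\pitch_0 k t)$ (with the conjugate sine expressions for $\missignal_t^\iimag$), together with the second derivatives needed for $\extrahessian$. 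The decisive preliminary observation is that the scalar prefactors collapse: since $-\tfrac{\var}{\truevar}\cdot\tfrac{2\truevar}{(\var)^2} = -\tfrac{2}{\var}$ coincides with the $-\tfrac{2}{\var}$ multiplying $\extrahessian$, we can write $A(\misparam_0) = -\tfrac{2}{\var}M$ with $M = \fim' + \extrahessian'$, where primes denote the prefactor-free sums. Hence $A^{-1}\fim A^{-1} = \tfrac{\truevar}{2}\,M^{-1}\fim' M^{-1}$, which already accounts for the leading $\truevar$ and for the disappearance of $\var,\truevar$ from $C,D,E,Z$.

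Next I would isolate the pitch by partitioned inversion. Partition $M = \left[\begin{smallmatrix} a & b^T \\ b & G\end{smallmatrix}\right]$, with $a\in\RR$ the pitch block, $b\in\RR^{2K}$ the coupling to the phases and amplitudes, and $G\in\RR^{2K\times 2K}$ the phase/amplitude block. Then the first column of $M^{-1}$ is $\tfrac{1}{s}[1,\,-(G^{-1}b)^T]^T$ with Schur complement $s = a - b^TG^{-1}b$, so by symmetry $\mcrlb(\pitch_0) = \tfrac{\truevar}{2}\,v^T\fim' v/s^2$ where $v = [1,\,-(G^{-1}b)^T]^T$. The structural simplification that makes this tractable is that $G$ is block-diagonal to leading order: its intra-component $2\times 2$ blocks scale as $\sum_t 1\sim N$, whereas its inter-component blocks, being sums of sinusoids at the distinct frequencies $(k-j)\pitch_0$, are only $\mathcal{O}(1)$; likewise every higher-harmonic sum $\sum_t \cos(m\pitch_0 t + \cdot)$, $m\neq 0$, is negligible. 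Thus $G^{-1}$ reduces to $K$ independent $2\times 2$ inversions, and $b^TG^{-1}b$ and $v^T\fim' v$ split into sums over $k$.

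I would then evaluate the surviving scalar sums by order in $N$: pitch--pitch entries carry $\sum_t t^2\sim N^3$, pitch--(phase/amplitude) entries carry $\sum_t t\sim N^2$, and the phase/amplitude entries carry $\sum_t 1\sim N$. The one genuinely non-vanishing oscillatory contribution comes from the cross terms between $\missignal_t(\misparam_0)$ (frequency $k\pitch_0$) and $x_t$ (frequency $\truefreq_k$) entering $\extrahessian'$ through $\wavediff_t$: because $\freqdiff_k = k\pitch_0 - \truefreq_k = -\inharm_k$ is small, the sums $\sum_t t^m\cos(\phasediff_k+\freqdiff_k t)$ and $\sum_t t^m\sin(\phasediff_k+\freqdiff_k t)$ do not average out and must be kept verbatim; these are exactly the inharmonic quantities appearing in $Z$ and $D$ (and, via $G^{-1}b$, in $E$). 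Carrying this out, the pitch block contributes $\fim'_{11}+\extrahessian'_{11}$ with $\extrahessian'_{11} = Z/2$, while the subtraction $-b^TG^{-1}b$ both re-centers $\sum_t t^2$ into $\sum_t(t-\bar t)^2 = \tfrac{N(N^2-1)}{12}$ (the source of $C = 2\sum_t(t-\bar t)^2\sum_k k^2\amp_k^2$) and generates the remaining $D$ and $-E$ terms; the numerator $v^T\fim'v$ produces the same $E$ with the opposite sign. One finds $s = \tfrac12(C - E + Z + D)$ and $v^T\fim'v = \tfrac12(C+E)$, whence $\mcrlb(\pitch_0) = \tfrac{\truevar}{2}\cdot\tfrac{(C+E)/2}{((C-E+Z+D)/2)^2} = \truevar\tfrac{C+E}{(C-E+Z+D)^2}$. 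As a check, setting $\inharm_k = 0$ makes $Z = D = E = 0$ and recovers the harmonic CRLB $\truevar/C = 6\truevar/[N(N^2-1)\sum_k k^2\amp_k^2]$.

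The main obstacle is the careful order bookkeeping inside the Schur complement. One must verify that the inter-component and higher-harmonic sums are genuinely $\mathcal{O}(N^{-4})$ relative to the retained terms, track the single quantity $E$ so that it enters $v^T\fim'v$ and $s$ with opposite signs, and --- crucially for the claimed $\mathcal{O}(N^{-4})$ error rather than a mere leading-order statement --- retain the \emph{exact} finite-$N$ coefficients (e.g.\ $\tfrac{N(N^2-1)}{6}$ in $C$ rather than $\tfrac{N^3}{3}$) throughout the partitioned inversion. Since $\misparam_0$ enters the expansion only through its fixed numerical values, the $\ell_2$-stationarity of $\misparam_0$ is not needed for the algebra, though such identities could serve as a cross-check on the coupling terms.
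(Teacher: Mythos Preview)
Your partitioned-inversion/Schur-complement route is essentially the paper's approach: the paper establishes that $\frac{1}{N}A(\misparam_0)$ and $\frac{1}{N}\fim(\misparam_0)$ are asymptotically arrowhead (Lemma~\ref{eq:lemma_arrowhead}) and then inverts via Sherman--Morrison--Woodbury, which is precisely the Schur complement in your $(1,2K)$ partition. Your identification of $\extrahessian'_{11}=Z/2$, the re-centering producing $C$, and the sign pattern of $E$ in numerator versus denominator all match the paper's decomposition $C=\firstelement_\missignal - z_\missignal^T(z_\missignal./d)$, $E=z_\wavediff^T(z_\wavediff./d)$, $D=-2z_\missignal^T(z_\wavediff./d)$, $Z=\firstelement_\wavediff$, together with the cross-term identity $u_\missignal^T\fim u_\wavediff=0$.

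There is, however, one genuine error: your closing claim that the $\ell_2$-stationarity of $\misparam_0$ is not needed is wrong, and the paper's proof relies on it explicitly. Your $G$ is the lower-right $2K\times2K$ block of $M=\fim'+\extrahessian'$, and $\extrahessian'$ contributes there at the \emph{same} order $\mathcal{O}(N)$ as $\fim'$: for instance the $(\phase_k,\phase_k)$ entry of $\extrahessian'$ equals $-\amp_k\sum_t\real\big(\overline{\wavediff_t}\,e^{i(\phase_k+k\pitch_0 t)}\big)$, whose dominant same-index part is $-\amp_k\sum_t\big[\amp_k-\trueamp_k\cos(\phasediff_k+\freqdiff_k t)\big]\sim N$. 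This is not a higher-harmonic sum that averages out; it vanishes only because it is $-\amp_k$ times the $\amp_k$-component of the stationarity condition $\sum_t\big(\wavediff_t^\rreal\nabla_\misparam\missignal_t^\rreal+\wavediff_t^\iimag\nabla_\misparam\missignal_t^\iimag\big)=0$. The paper invokes exactly this in Lemma~\ref{eq:lemma_arrowhead}: every second derivative of $\missignal_t$ not involving $\pitch_0$ is a constant multiple of some first derivative, so the non-pitch block of $\extrahessian(\misparam_0)$ is identically zero and $G$ reduces to the diagonal $\text{diag}(d)$ coming from $\fim'$ alone. Without this, your description of $G$ as having intra-component blocks ``$\sim\sum_t 1$'' is incomplete, and the clean division by $d_{\phase_k}=N\amp_k^2$, $d_{\amp_k}=N$ that produces the stated expressions for $E$ and $D$ would not go through. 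So keep your plan, but replace the last paragraph's dismissal with an appeal to stationarity to zero out the non-pitch block of $\extrahessian'$.
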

\begin{proof}
See the appendix.
\end{proof}
It may be noted that in the perfectly harmonic case, the terms $E, Z$, and $D$ are all equal to zero as the pseudo-true parameter $\misparam_0$ then coincides with the actual signal parameter. It is worth noting that $\truevar/C$ is equal to the asymptotic CRLB for the perfectly harmonic model \cite{ChristensenJJ07_15}.
%
%
\begin{figure}[t]
        \centering
            \includegraphics[width=.46\textwidth]{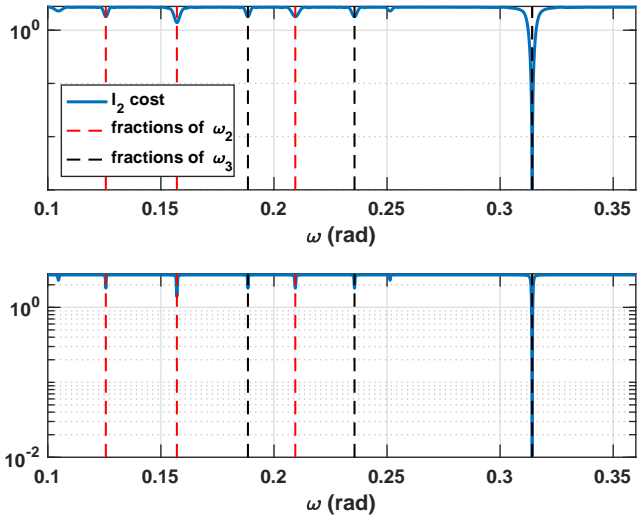}
           \caption{The $\ell_2$ cost function used in Def.~\ref{def:l2_pitch} when applied to a perfectly harmonic signal with fundamental frequency $\pitch_0 = \pi/10$ for two different sample lengths. Top panel: $N = 500$. Bottom panel: $N = 3000$.} 
            \label{fig:l2_cost_combined_harm}
\vspace{-2mm}\end{figure}
%
%
\begin{remark}
As one may compute the MCRLB for an estimate of the pseudo-true fundamental frequency $\pitch_0$, it is also possible to construct a bound for the expected MSE for misspecified estimators. That is, if one considers an estimate of $\pitch_0$ to be a misspecified estimate of $\truefreq_1$, the theoretical MSE is given by
\begin{align}
	\expop_{\by}\left( (\hat{\pitch}_0- \truefreq_1)^2\right) = \mcrlb(\pitch_0) + (\pitch_0-\truefreq_1)^2.
\end{align}
In fact, as will be illustrated in the numerical section, the MSE for this misspecified estimate may for moderate values of SNR and sample lengths $N$ be lower than the CRLB for an unstructured sinusoidal model, due to the MCRLB often being considerable smaller than the CRLB. 
\end{remark}
%
\begin{figure}[t]
        \centering
            \includegraphics[width=.46\textwidth]{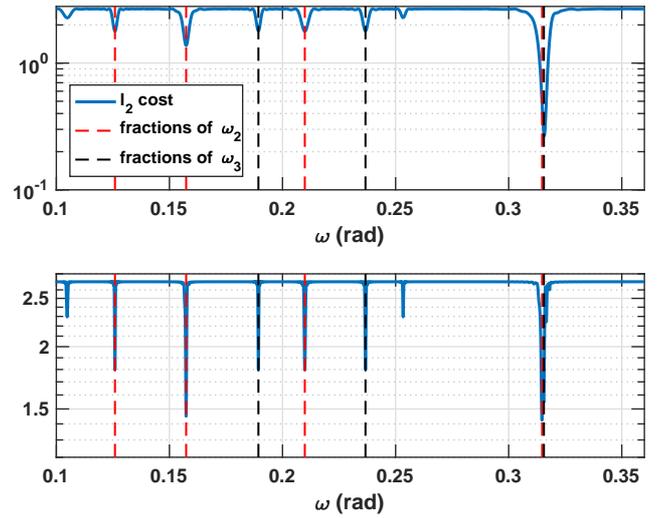}
           \caption{The $\ell_2$ cost function used in Def.~\ref{def:l2_pitch} when applied to an inharmonic signal generated using \eqref{eq:string_model}, with $\pitch_0 = \pi/10$ and $\beta = 10^{-3}$ for two different sample lengths. Top panel: $N = 500$. Bottom panel: $N = 3000$.} 
            \label{fig:l2_cost_combined_inharm}
\vspace{-2mm}\end{figure}
%
Despite its appeal as formalizing tacit assumptions behind using harmonic estimators, as well as resulting in easily computed performance bounds, Def.~\ref{def:l2_pitch} becomes unsatisfactory if one allows for very long signals. Although the normalization by $N$ allows for considering limits of the criterion \eqref{eq:ls_approximation}, as it is guaranteed to be finite, the definition of pitch becomes ambiguous. To see this, it may be noted that the correlation of any two sinusoids with distinct frequencies tends to zero as $N\to \infty$. Thus, \eqref{eq:ls_approximation} is minimized by setting a multiple of the pseudo-true $\pitch_0$ equal to the frequency $\truefreq_k$ corresponding to the largest amplitude $\trueamp_k$. However, any integer multiple is an equally valid choice. Def.~\ref{def:l2_pitch} allows for selecting $\pitch_0$ as $\pitch_0 = \truefreq_k/\ell$, where $k = \text{arg max}_m \trueamp_m$, for any $\ell \in \left\{ 1,\ldots,K \right\}$. Thus, $K$ indistinguishable candidates for $\pitch_0$ exist if $N$ is large enough. It may also be noted that $\pitch_0$ depends on the initial phases $\truephase_k$ of the inharmonic signal, parameters that may be considered nuisance.

The issue of ambiguity for large $N$ is illustrated in Figures~\ref{fig:l2_cost_combined_harm} and \ref{fig:l2_cost_combined_inharm}. Specifically, Figure~\ref{fig:l2_cost_combined_harm} displays the $\ell_2$ cost function for a perfectly harmonic signal with $\pitch_0 = \pi/10$ consisting of five harmonics, where $\trueamp_2 = \trueamp_3$ are the largest amplitudes, with the top and bottom panels of Figure~\ref{fig:l2_cost_combined_harm} showing the cost for $N = 500$ and $N = 3000$, respectively. As can be seen, for both cases, $\pitch_0$ corresponds to the unique global minimum, with a cost of exactly zero. In contrast, Figure~\ref{fig:l2_cost_combined_inharm} displays the same scenario, with the difference being that the perfectly harmonic structure having been replaced with the string model in \eqref{eq:string_model} with $\beta = 10^{-3}$. For $N = 500$, the global minimum is still unique, and the definition thus unambiguous. However, for $N = 3000$, the cost function value at the local minimum at $\truefreq_2/4$ approaches that of the global minimum. If one lets $N\to \infty$, these cost functions values will become identical, in addition to several other isolated global minima appearing.
It is worth noting that, as the sample length becomes longer, the larger the size of the error incurred by the harmonic approximation is due to the sinusoidal components of the approximating harmonic signal becoming increasingly orthogonal to the components of the inharmonic signal. Thus, as the sample length approaches infinity, the squared $\ell_2$-difference will approach the sum of the (squared) norms of any sinusoidal component of the inharmonic whose frequency is not matched perfectly by a component of the harmonic approximation.
The issue of ambiguity may be addressed by instead considering the spectral properties of the signal $\signal_t$, allowing for defining a harmonic approximation of the more abstract signal, i.e., when no particular sample length $N$ has been specified. This can be achieved through the use of OMT, as described next.
%
%
%
%
\section{An OMT-based definition of pitch}\label{sec:omt_def}
As an alternative to defining the harmonic counterpart of an inharmonic signals by means of waveform approximation, such as in Def.~\ref{def:l2_pitch}, one may instead consider the signals' spectral representation. To that end, assume that the phase parameters $\truephase_k$ in \eqref{eq:noiseless_sine_model} are independent random variables uniformly distributed on $[-\pi,\pi)$, implying that the signal in \eqref{eq:noiseless_sine_model} is a wide-sense stationary process. Then, the spectrum of the signal in \eqref{eq:noiseless_sine_model} is given by
\begin{align}
	\Phi_x(\omega) = 2\pi\sum_{k=1}^K \trueamp_k^2 \delta(\omega-\truefreq_k),
\end{align}
where $\omega \in [-\pi,\pi)$, and $\delta(\cdot)$ denotes the Dirac delta function. This constitutes a more abstract representation of $x_t$ as it is not related to any particular sample length $N$. Furthermore, any harmonic spectrum may be represented as
\begin{align} \label{eq:harmspec}
	\Phi_{\mu}(\omega) = 2\pi\sum_{\ell=1}^L \amp_\ell^2 \delta(\omega-\ell\pitch_0),
\end{align}
where we let $L$ be potentially different from $K$. Thus, one may define the fundamental frequency of an inharmonic signal as the $\pitch_0$ corresponding to an approximating harmonic spectrum $\Phi_{\mu}$. Note however that considering the approximation in $L_p$, for some $p\geq 1$, leads to the same problem as was encountered in the limiting case for $\ell_2$ approximation of the waveform, namely that of selecting $\pitch_0$ as $\pitch_0 = \truefreq_k/\ell$, where $k = \text{arg max}_m \trueamp_m$, for any $\ell \in \left\{ 1,\ldots,L \right\}$ yield equally accurate approximations. However, a measure of distance not related to point-wise comparison of $\Phi_x$ and $\Phi_\mu$ may be obtained by considering the framework of OMT. Within this framework, the best approximation corresponds to the one requiring the least costly perturbation in order to shift the observed signal to that of a perfectly harmonic model, with perturbations being realized by moving the whole distribution, as opposed to the point-wise changes implied by $L_p$ norms.

To formulate this, let $\RT = [-\pi,\pi)$ and let $\ccM_+(\RT)$ denote the set of non-negative, generalized integrable functions on $\RT$. Elements of $\ccM_+(\RT)$ may be interpreted as distributions of mass on $\RT$, and, in particular, $\Phi_x \in \ccM_+(\RT)$ and $\Phi_\mu \in \ccM_+(\RT)$. Then, for any $\Phi_0, \Phi_1 \in \ccM_+(\RT)$ with the same total mass, i.e., $\int_{\RT} \Phi_0(\omega)d\omega = \int_{\RT} \Phi_1(\omega)d\omega$, one may define a notion of distance, $\Omtspec: \ccM_+(\RT)\times \ccM_+(\RT) \to \RR$, between them by the Monge-Kantorovich problem of OMT \cite{Villani08},
\begin{align*}
	\Omtspec(\Phi_0,\Phi_1) &= \min_{\ccM_+(\RT\times\RT)} \int_{\RT\times \RT} M(\omega_1,\omega_2)\costfunc(\omega_1,\omega_2)d\omega_1d\omega_2 \\
	&\quad\text{s.t. } \int_{\RT} M(\cdot,\omega)d\omega = \Phi_0 \;,\; \int_{\RT} M(\omega,\cdot)d\omega = \Phi_1 \nonumber
\end{align*}
where $\costfunc: \RT\times \RT \to \RR_+$ is a cost function defining the cost of moving a unit mass. Here, $M$ is referred to as a transport plan as it may be interpreted as describing how mass is moved from $\Phi_0$ to $\Phi_1$. As the constraints ensure that $M$ transports all available mass, and no other, between $\Phi_0$ and $\Phi_1$, the objective corresponds to the total cost of transport. The idea of using OMT as a measure of distance between spectra has earlier been considered in \cite{GeorgiouKT09_57}, wherein it was shown that $\Omtspec$, for certain choices of $\costfunc$, may be used for defining a metric on $\ccM_+(\RT)$ (see also \cite{ElvanderJK18_66,ElvanderHJK20_171} for corresponding covariance based signal distances). Herein, we will use $\costfunc(\omega_1,\omega_2) = (\omega_1-\omega_2)^2$, i.e., the cost of transport between two frequencies is equal to their squared Euclidean distance. With this, one may find the best harmonic approximation of $\Phi_x$ in the OMT sense by minimizing $\Omtspec(\cdot,\Phi_x)$ over the set of harmonic spectra. Note that, in contrast to $L_p$ minimization, this corresponds to finding the most efficient way of perturbing the spectral peaks of $\Phi_x$ in frequency so that the resulting spectrum is harmonic, with the cost of moving a peak being proportional to its power. Formally,
\begin{align}\label{eq:omt_harm}
	\Phi_\mu = \argminwrt[\Phi \in \Omega_L] \Omtspec(\Phi,\Phi_x)
\end{align}
where $\Omega_L$ is the set of harmonic spectra, i.e.,
\begin{align*}
	\Omega_L =& \Big\{ \Phi \in \ccM_+(\RT) \mid\\
	&
	 \qquad\harmspec(x) = 2\pi\sum_{\ell=1}^L \amp_\ell^2 \delta(\omega-\ell\pitch_0) \;,\; \amp_\ell \geq0, \pitch_0 \in \RT  \Big\}.
\end{align*}
To see that \eqref{eq:omt_harm} may be solved efficiently, note that for any candidate $\pitch_0$, corresponding to a subset of $\Omega_L$, all power at the frequency $\truefreq_k$ in $\Phi_x$ will be transported to the nearest integer multiple of $\pitch_0$ when evaluating $\Omtspec$. That is, letting $\Omega_{L,\pitch_0}$ denote such a subset, i.e., all harmonic spectra with fundamental frequency $\pitch_0$, we get
\begin{align*}
	 \min_{\harmspec \in \Omega_{L,\pitch_0}} \Omtspec(\Phi,\Phi_x)  = 2\pi\sum_{k=1}^K \trueamp_k^2 \min_{\ell \in \left\{ 1,2,\ldots,L \right\}} (\ell\pitch_0 - \truefreq_k)^2.
\end{align*}
Thus, solving \eqref{eq:omt_harm} is equivalent to solving
\begin{align} \label{eq:best_omt_f0}
	\minwrt[\omega_0] 2\pi\sum_{k=1}^K \trueamp_k^2 \min_{\ell \in \left\{ 1,2,\ldots,L \right\}} (\ell\pitch_0 - \truefreq_k)^2.
\end{align}
Furthermore, at least one harmonic spectrum attaining the minimal cost exists, and is given by
\begin{align*}
	\Phi_\mu(\omega) = 2\pi\sum_{\ell = 1}^L \left(\sum_{k \in \ccI_\ell } \trueamp_k^2\right) \delta(\omega-\ell \pitch_0),
\end{align*}
where $\pitch_0$ solves \eqref{eq:best_omt_f0}, and with
\begin{align} \label{eq:harm_index}
	\ccI_\ell = \left\{ k \mid \ell = \argminwrt[m] (m\pitch_0 - \truefreq_k)^2 \right\}
\end{align}
denoting the set of indices that are transported to harmonic $\ell$, for $\ell = 1,\ldots,L$. As may be noted, the maximal harmonic order, $L$, may not be equal to $K$ (indeed, $K$ may be unknown, or may not be a suitable choice for $L$). Although $L$ could be left as a user defined parameter, we here offer a data dependent choice with both intuitive appeal and beneficial practical consequences.
%
%
\begin{definition}[Maximal harmonic order] \label{def:max_harm_order}
For a set $\left\{ \truefreq_k\right\}_{k=1}^K$ such that $\truefreq_{k+1}> \truefreq_{k}$ for $k = 1,\ldots,K-1$, let $d = \min \left\{ \truefreq_1, \truefreq_2 - \truefreq_1, \truefreq_3-\truefreq_2,\ldots, \truefreq_K-\truefreq_{K-1}\right\}$, i.e., the minimum distance between two consecutive sinusoidal components. Then, the harmonic order for the set $\left\{ \truefreq_k\right\}_{k=1}^K$ is defined as $L = \min \left\{ \ell \in \RN \mid \ell d \geq \truefreq_K  \right\}$.
\end{definition}
%
%
\begin{figure}[t]
        \centering
            \includegraphics[width=.46\textwidth]{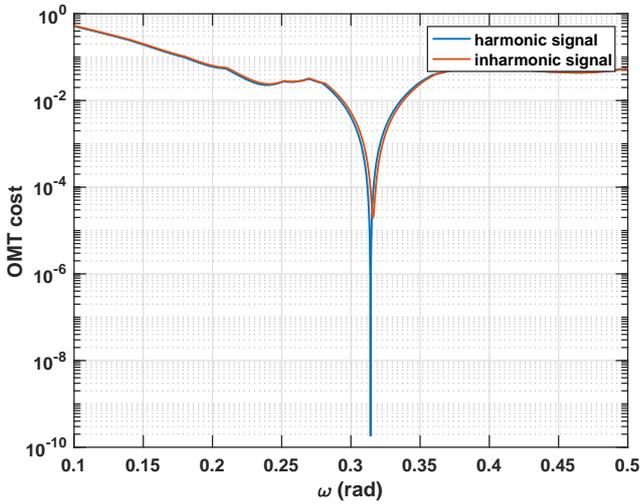}
           \caption{The OMT cost function, $q_L$, for a perfectly harmonic signal with fundamental frequency $\pitch_0 = \pi/10$, as well as for an inharmonic signal generated from the string model in \eqref{eq:string_model} with $\pitch_0 = \pi/10$ and $\beta = 10^{-3}$.} 
            \label{fig:omt_cost_combined_harm_inharm}
\vspace{-2mm}\end{figure}
%
With this, the definition of the OMT harmonic spectrum is:
%
\begin{definition}[Closest harmonic spectrum] \label{def:omt_spectrum}
Let \mbox{$\Phi_x(\omega) = 2\pi\sum_{k=1}^K \trueamp_k^2 \delta(\omega-\truefreq_k)$} be a (possibly) inharmonic spectrum, and let $L$ be the maximal harmonic order as defined in Def.~\ref{def:max_harm_order}. Then, the closest harmonic spectrum (CHS) is defined as
\begin{align*}
	\Phi_\mu(\omega) = 2\pi\sum_{\ell = 1}^L \left(\sum_{k \in \ccI_\ell} \trueamp_k^2\right) \delta(\omega-\ell \pitch_0).
\end{align*}
where $\pitch_0$ solves \eqref{eq:best_omt_f0} and $\ccI_\ell$ is given by \eqref{eq:harm_index}.
\end{definition}
%
It may here be noted that selecting $L$ according to Def.~\ref{def:max_harm_order} acts as a safeguard against the so called sub-octave problem, i.e., associating the spectral lines with $\omega_0/2^P$, for $P\geq 1$, which is bound to happen if $L$ is chosen excessively large.
The following two propositions verify that Def.~\ref{def:omt_spectrum} behaves in a stable and predictable way. To simplify the proofs, define
\begin{align} \label{eq:harm_spectrum_cost}
	q_L(\omega_0) \triangleq 2\pi\sum_{k=1}^K \trueamp_k^2 \min_{\ell \in \left\{ 1,2,\ldots,L \right\}} (\ell\pitch_0 - \truefreq_k)^2,
\end{align}
i.e., the minimal cost of transporting $\Phi_x$ to a harmonic spectrum with fundamental frequency $\pitch_0$ and $L$ harmonics. Thus, the fundamental frequency of the CHS minimizes $q_L$.
Furthermore, it may be noted that the following propositions hold also if one instead of using Def.~\ref{def:max_harm_order} selects $L = K$.
%
%
\begin{proposition}
Let $\mbox{$\Phi_x(\omega) = 2\pi\sum_{k=1}^K \trueamp_k^2 \delta(\omega-k\tilde{\pitch}_0)$}$ be a harmonic spectrum with fundamental frequency $\tilde{\pitch}_0$. Then, the CHS is identical to $\Phi_x$, and, in particular, $\pitch_0 = \tilde{\pitch}_0$.
\end{proposition}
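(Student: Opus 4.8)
The plan is to reduce everything to the scalar cost $q_L$ defined in \eqref{eq:harm_spectrum_cost}, since by Def.~\ref{def:omt_spectrum} the CHS is completely determined by the minimizer $\pitch_0$ of $q_L$ together with the index sets \eqref{eq:harm_index}. First I would simply evaluate the cost at the candidate $\pitch_0 = \tilde{\pitch}_0$: for each $k$ the choice $\ell = k$ (which is admissible since $k \le K = L$ for a signal with $K$ consecutive harmonics present, under either $L=K$ or Def.~\ref{def:max_harm_order}) gives $(\ell\tilde{\pitch}_0 - k\tilde{\pitch}_0)^2 = 0$, so that $q_L(\tilde{\pitch}_0) = 0$. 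Since every summand in $q_L$ is nonnegative, $q_L \ge 0$ everywhere, and hence $\tilde{\pitch}_0$ is a global minimizer attaining the value zero.

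It then remains to show that $\tilde{\pitch}_0$ is the \emph{only} minimizer, which is the crux of the argument. Suppose $q_L(\pitch_0) = 0$ for some $\pitch_0$. Because each term $\trueamp_k^2 \min_\ell (\ell\pitch_0 - k\tilde{\pitch}_0)^2$ is nonnegative and the $\trueamp_k$ are positive, each true frequency $k\tilde{\pitch}_0$ must coincide exactly with an integer multiple $\ell_k \pitch_0$, with $\ell_k \in \{1,\ldots,L\}$. As the true frequencies are positive ($\tilde{\pitch}_0 > 0$, by the ordering assumption), this forces $\pitch_0 > 0$. Taking $k=1$ gives $\pitch_0 = \tilde{\pitch}_0/\ell_1$, while taking $k=K$ gives $\ell_K \pitch_0 = K\tilde{\pitch}_0$, so that $\ell_K = K\ell_1$. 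Since $\ell_K \le L = K$ and $\ell_1 \ge 1$, we obtain $K\ell_1 \le K$, i.e.\ $\ell_1 = 1$, whence $\pitch_0 = \tilde{\pitch}_0$.

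I expect this uniqueness step to be the main obstacle, as it is precisely where the bound on the harmonic order enters: it is what rules out the spurious subharmonic candidates $\tilde{\pitch}_0/j$ (the sub-octave ambiguity), since the highest harmonic $K\tilde{\pitch}_0$ cannot be reached by any admissible multiple of a smaller fundamental. Finally, with $\pitch_0 = \tilde{\pitch}_0$ established, I would read off the CHS from \eqref{eq:harm_index}: for the frequency $k\tilde{\pitch}_0$ the minimizing index is $\ell = k$, so $\ccI_\ell = \{\ell\}$ for each $\ell$, and the mass placed at harmonic $\ell$ is $\sum_{k \in \ccI_\ell} \trueamp_k^2 = \trueamp_\ell^2$. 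Hence $\harmspec_\mu(\omega) = 2\pi\sum_{\ell=1}^L \trueamp_\ell^2 \delta(\omega - \ell\tilde{\pitch}_0) = \harmspec_x(\omega)$, which completes the proof.
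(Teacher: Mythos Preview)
Your proposal is correct and follows the same approach as the paper: verify that $L=K$, check that $q_K(\tilde{\pitch}_0)=0$, and argue that $q_K(\pitch')>0$ for every $\pitch'\neq\tilde{\pitch}_0$. The paper's proof simply asserts the last step without justification, whereas you actually supply the argument by showing that any zero of $q_K$ must be a subharmonic $\tilde{\pitch}_0/\ell_1$ and then using the bound $\ell_K\le L=K$ on the highest harmonic to force $\ell_1=1$; you also explicitly verify that the resulting CHS spectrum coincides with $\Phi_x$, which the paper leaves implicit.
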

\begin{proof}
Clearly, the maximal harmonic order is $L = K$. Further, $q_K(\tilde{\pitch}_0) = 0$, and $q_K(\omega') > 0$ for any $\omega' \neq \tilde{\pitch}_0$.
\end{proof}
%
This proposition ensures that for a perfectly harmonic spectrum, the CHS is the spectrum itself. As may be seen from \eqref{eq:harm_spectrum_cost}, the function is $q_L$ is non-convex, with several local minima. Thus, for arbitrarily large inharmonicity parameters $\inharm_k$ and arbitrary choices of amplitudes $\trueamp_k$, the fundamental frequency cannot be guaranteed to be found in a certain region (it may be noted that this is also the case for the $\ell_2$ approximation). However, for small harmonic perturbations, the following proposition holds.
%
\begin{figure}[t]
        \centering
            \includegraphics[width=.46\textwidth]{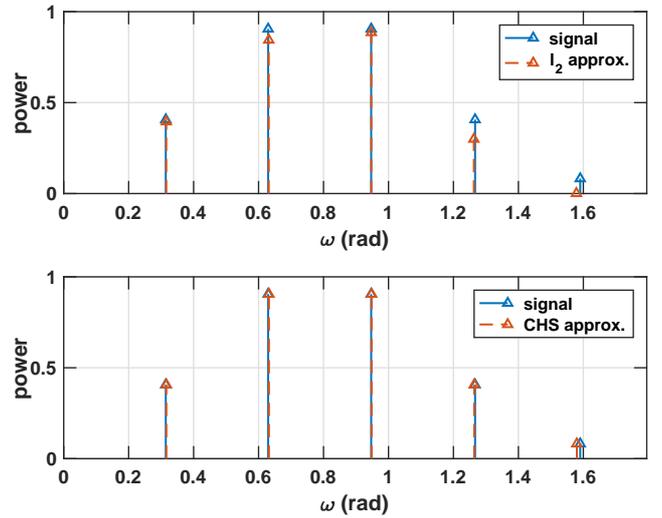}
           \caption{Spectrum of inharmonic signal, as well as spectra of $\ell_2$ and CHS approximations corresponding to Defs.~\ref{def:l2_pitch} and \ref{def:omt_spectrum}, respectively. Top panel: signal and $\ell_2$ approximation. Bottom panel: signal and CHS approximation.} 
            \label{fig:l2_versus_omt_spectrum}
\vspace{-2mm}\end{figure}
%
%
\begin{proposition} \label{prop:bounded_perturbation}
Let $\mbox{$\Phi_x(\omega) = 2\pi\sum_{k=1}^K \trueamp_k^2 \delta(\omega-\truefreq_k)$}$, where $\truefreq_k = k\tilde{\pitch}_0 + \inharm_k$, with $\norm{\inharm}_\infty \triangleq \max_k\abs{\inharm}_k< \tilde{\pitch}_0/(2K+3)$. Then, $L \in \left\{ K, K+1 \right\}$, and the transport cost $q_L$ has a exactly one local minimum $\pitch_0$ on the interval $\left[\tilde{\pitch}_0 - \norm{\inharm}_\infty , \tilde{\pitch}_0 + \norm{\inharm}_\infty\right]$. Furthermore,
\begin{align*}
	\abs{\pitch_0 - \tilde{\pitch}_0} \leq \frac{\sum_{k=1}^K\trueamp_k^2 k}{\sum_{k=1}^K\trueamp_k^2 k^2} \norm{\inharm}_\infty. 
\end{align*}
\end{proposition}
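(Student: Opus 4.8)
The plan is to show that, throughout the interval $[\tilde{\pitch}_0 - \norm{\inharm}_\infty, \tilde{\pitch}_0 + \norm{\inharm}_\infty]$, the transport cost $q_L$ in \eqref{eq:harm_spectrum_cost} collapses to a single smooth, strictly convex quadratic; once this is established, both the uniqueness of the local minimum and the explicit bound drop out immediately. I would organize the argument in three stages: bounding the harmonic order $L$, identifying the active harmonic assignment on the interval, and minimizing the resulting quadratic.

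First I would settle $L \in \left\{ K, K+1 \right\}$ using Def.~\ref{def:max_harm_order}. Since $\truefreq_k = k\tilde{\pitch}_0 + \inharm_k$, each consecutive gap $\truefreq_{k+1} - \truefreq_k = \tilde{\pitch}_0 + \inharm_{k+1} - \inharm_k$ lies within $2\norm{\inharm}_\infty$ of $\tilde{\pitch}_0$, while $\truefreq_1$ lies within $\norm{\inharm}_\infty$ of $\tilde{\pitch}_0$; hence the minimal spacing satisfies $\tilde{\pitch}_0 - 2\norm{\inharm}_\infty < d \leq \tilde{\pitch}_0 + \norm{\inharm}_\infty$. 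Combining these with $K\tilde{\pitch}_0 - \norm{\inharm}_\infty < \truefreq_K < K\tilde{\pitch}_0 + \norm{\inharm}_\infty$ and the hypothesis $\norm{\inharm}_\infty < \tilde{\pitch}_0/(2K+3)$, a short computation gives $K-1 < \truefreq_K/d < K+1$ (the upper bound is in fact where the constant $2K+3$ forces the ratio below $K+1$). Since $L$ is the smallest integer $\ell$ with $\ell d \geq \truefreq_K$, i.e. $\ell \geq \truefreq_K/d$, this yields $L \in \left\{ K, K+1 \right\}$.

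The crux, and the step I expect to be the main obstacle, is showing that for every $\pitch_0$ in the interval the nearest multiple to each $\truefreq_k$ is exactly $\ell = k$, uniformly over all competing harmonics. I would estimate the correct distance as $\abs{k\pitch_0 - \truefreq_k} = \abs{k(\pitch_0 - \tilde{\pitch}_0) - \inharm_k} \leq (k+1)\norm{\inharm}_\infty \leq (K+1)\norm{\inharm}_\infty$, while for any $m \neq k$ with $m \leq L \leq K+1$ the triangle inequality gives $\abs{m\pitch_0 - \truefreq_k} \geq \abs{m-k}\tilde{\pitch}_0 - m\abs{\pitch_0 - \tilde{\pitch}_0} - \abs{\inharm_k} \geq \tilde{\pitch}_0 - (K+2)\norm{\inharm}_\infty$. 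The assumption $\norm{\inharm}_\infty < \tilde{\pitch}_0/(2K+3)$ is precisely what makes $\tilde{\pitch}_0 - (K+2)\norm{\inharm}_\infty > (K+1)\norm{\inharm}_\infty$, so $\ell = k$ is the strict, tie-free minimizer of the inner problem throughout the interval. The difficulty here is handling all $m$ at once and verifying that the single constant $2K+3$ suffices for the worst case $k = K$, $\abs{m-k} = 1$, $m = K+1$.

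With the assignment fixed, on the interval $q_L(\pitch_0) = 2\pi\sum_{k=1}^K \trueamp_k^2 (k\pitch_0 - \truefreq_k)^2$, which is strictly convex since its leading coefficient $\sum_k \trueamp_k^2 k^2$ is positive. Setting the derivative to zero gives the unique stationary point $\pitch_0 = \big(\sum_k \trueamp_k^2 k \truefreq_k\big)/\big(\sum_k \trueamp_k^2 k^2\big) = \tilde{\pitch}_0 + \big(\sum_k \trueamp_k^2 k \inharm_k\big)/\big(\sum_k \trueamp_k^2 k^2\big)$. Taking absolute values and bounding $\abs{\inharm_k} \leq \norm{\inharm}_\infty$ yields the claimed inequality; moreover, since $k^2 \geq k$ gives $\sum_k \trueamp_k^2 k^2 \geq \sum_k \trueamp_k^2 k$, the bound is at most $\norm{\inharm}_\infty$, so this vertex lies inside the interval. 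Hence $q_L$ restricted to the interval equals a strictly convex quadratic attaining its minimum at an interior point, which is therefore its unique local minimum there, completing the proof.
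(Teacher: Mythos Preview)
Your proposal is correct and follows essentially the same three-stage structure as the paper's proof: bound $L$ via Def.~\ref{def:max_harm_order}, pin down the assignment $\ell=k$ on the whole interval, then minimize the resulting quadratic. The only cosmetic difference is that the paper checks just $m=k\pm 1$ (which suffices since $(\ell\pitch_0-\truefreq_k)^2$ is convex in $\ell$), whereas you bound all competing $m\leq K+1$ directly; both yield the same threshold $\norm{\inharm}_\infty<\tilde{\pitch}_0/(2K+3)$.
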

\begin{proof}
See the appendix.
\end{proof}
For non-extreme choices\footnote{Adversarial examples with combinations of very large and small amplitudes can always be constructed as to move the global mimimum.} of the amplitudes $\trueamp_k$, the local minimum in Prop.~\ref{prop:bounded_perturbation} is expected to be also the global minimum, i.e., small inharmonic perturbations is expected to yield a CHS whose pitch is close to that of the unperturbed perfectly harmonic signal. With this, one may conclude that Def.~\ref{def:omt_spectrum} provides a reasonable and well-behaved definition of pitch for inharmonic signals with some advantages over the $\ell_2$ approximation. Firstly, as the approximation is performed in the spectral domain, there is no dependency parameters related to the observation of an instance of the signal, such as sample length or initial phases. Secondly, it codifies the intuitive idea that a signal is almost harmonic if only a slight perturbation of its spectrum is needed to obtain an harmonic structure.

To illustrate the intuitive appeal of Def.~\ref{def:omt_spectrum}, Figure~\ref{fig:omt_cost_combined_harm_inharm} considers the same example as in Figures~\ref{fig:l2_cost_combined_harm} and \ref{fig:l2_cost_combined_inharm}, with the difference being that the approximation is performed in the spectral domain and in the OMT sense instead of in the temporal domain and $\ell_2$. As can be seen, using Def.~\ref{def:omt_spectrum} causes only a small perturbation of the $\pitch_0$ defining the approximating spectrum as compared to the perfectly harmonic case. In fact, the harmonic spectrum of Def.~\ref{def:omt_spectrum} is constructed by slightly shifting the frequency locations of the peaks of the actual signal spectrum. This is illustrated in Figure~\ref{fig:l2_versus_omt_spectrum}, displaying the spectrum\footnote{As the spectra are all singular, the pointmasses are here represented by scaled arrows.} of the inharmonic signal, as well as the spectra of the approximations from Defs.~\ref{def:l2_pitch} and \ref{def:omt_spectrum}. As may be noted, the total power is retained for Def.~\ref{def:omt_spectrum}, whereas amplitudes are underestimated or components altogether missing due to orthogonality for Def.~\ref{def:l2_pitch}. As a side, it may also be noted that the cost function is quadratic in a neighborhood of this fundamental frequency. 

As the $\ell_2$ approximation is closely related to the MLE derived under the perfectly harmonic assumption, Def.~\ref{def:l2_pitch} is readily applicable to actual estimation problems. However, Def.~\ref{def:omt_spectrum} may also be used as a plug-in estimator by replacing the quantities $(\trueamp_k,\truefreq_k)$ by finite-sample estimates $(\hat{\trueamp}_k,\hat{\truefreq}_k)$ and consider the obtained fundamental frequency to be an estimate of $\pitch_0$. The following proposition details the asymptotic behavior of such an estimator.
\begin{proposition}[CHS estimate] \label{prop:chs_estimate}
Let $(\hat{\trueamp}_k,\hat{\truefreq}_k)$ be MLEs of $(\trueamp_k,\truefreq_k)$, for all $k$, obtained from an unstructured sinusoidal model with Gaussian noise. Then, under the assumptions of Prop~\ref{prop:bounded_perturbation}, the plug-in estimate of the CHS pitch, $\hat{\pitch}_0$, is an asymptotically, i.e., as $N\to \infty$, consistent estimator of the CHS $\pitch_0$ with asymptotic variance given by
\begin{align}
	\text{Var}(\hat{\pitch}_0) &= \frac{6\truevar}{N(N^2-1)\sum_{k=1}^K k^2\trueamp_k^2} \label{eq:chs_var}\\
	&\!+\!\frac{2\truevar}{N\left( \sum_{k=1}^K \!k^2\trueamp_k^2  \right)^{\!4}} \!\sum_{k=1}^K k^2\trueamp_k^2 \!\left( \!\sum_{\ell=1}^K \!\ell \trueamp_\ell^2 (\ell\truefreq_k \!-\! k\truefreq_\ell) \!\right)^2\!. \nonumber
\end{align}
\end{proposition}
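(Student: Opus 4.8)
The plan is to reduce the CHS pitch to an explicit closed form, recognize the plug-in rule as a smooth function of the unstructured MLEs, and then combine MLE consistency with the delta method. Under the hypotheses carried over from Prop.~\ref{prop:bounded_perturbation}, the inharmonicity is small enough that each $\truefreq_k$ is transported to the $k$:th harmonic, i.e., the inner minimization in \eqref{eq:harm_spectrum_cost} is attained at $\ell = k$ throughout the interval $[\tilde{\pitch}_0 - \norm{\inharm}_\infty, \tilde{\pitch}_0 + \norm{\inharm}_\infty]$. On this interval the discrete assignment is frozen, so $q_L$ agrees with the smooth quadratic $2\pi\sum_k \trueamp_k^2(k\pitch_0-\truefreq_k)^2$, and setting its derivative to zero gives the unique interior minimizer

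\begin{align*}
	\pitch_0 = \frac{\sum_{k=1}^K \trueamp_k^2 k \truefreq_k}{\sum_{k=1}^K \trueamp_k^2 k^2}.
\end{align*}

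First I would establish this closed form rigorously, noting that neither the frozen assignment nor the value of $L$ (which Prop.~\ref{prop:bounded_perturbation} pins to $K$ or $K+1$) affects the minimizer, since no spectral mass sits beyond the $K$:th component.

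Next I would define the estimator as the same rational function $h$ evaluated at the unstructured MLEs, $\hat{\pitch}_0 = h(\hat{\trueamp}_1,\ldots,\hat{\trueamp}_K,\hat{\truefreq}_1,\ldots,\hat{\truefreq}_K)$. Because $\trueamp_k>0$, the denominator $Q = \sum_k k^2\trueamp_k^2$ is bounded away from zero and $h$ is continuously differentiable at the true parameter. Combined with the consistency of the unstructured sinusoidal MLEs as $N\to\infty$ and the continuous mapping theorem, this yields $\hat{\pitch}_0\to\pitch_0$; in particular, the estimated frequencies eventually satisfy the small-perturbation condition with probability tending to one, so the smooth closed form is valid asymptotically.

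For the variance I would invoke the delta method, $\text{Var}(\hat{\pitch}_0)\approx \nabla h^{T} C \nabla h$, where $C$ is the asymptotic covariance (CRLB) of the unstructured sinusoidal MLE. Here I would use the standard fact that, for well-separated complex sinusoids in white Gaussian noise, the component estimates are asymptotically uncorrelated both across components and between amplitude and frequency, with $\text{Var}(\hat{\truefreq}_k)\to 6\truevar/(N(N^2-1)\trueamp_k^2)$ and $\text{Var}(\hat{\trueamp}_k)\to \truevar/(2N)$; the phase estimates never enter since $h$ is phase-independent. With $P=\sum_k k\trueamp_k^2\truefreq_k$, the partials $\partial h/\partial\truefreq_j = \trueamp_j^2 j/Q$ and $\partial h/\partial\trueamp_j = 2\trueamp_j j(\truefreq_j Q - jP)/Q^2$ make the frequency contribution reduce to the first term of \eqref{eq:chs_var}, while the amplitude contribution reproduces the second term once one recognizes $\truefreq_k Q - kP = \sum_\ell \ell\trueamp_\ell^2(\ell\truefreq_k - k\truefreq_\ell)$. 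As a consistency check, in the perfectly harmonic case $\truefreq_k = k\tilde{\pitch}_0$ one has $P = \tilde{\pitch}_0 Q$, so $\truefreq_k Q - kP = 0$ and the amplitude term vanishes, leaving precisely the harmonic-pitch CRLB.

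The main obstacle is not the algebra but the rigorous treatment of the nonsmooth, piecewise structure of $q_L$: I must guarantee that the frozen assignment $\ell=k$ persists for the perturbed (estimated) frequencies, so that $\hat{\pitch}_0$ genuinely coincides with the smooth closed form and the delta method is legitimate. This is exactly where the strict bound $\norm{\inharm}_\infty < \tilde{\pitch}_0/(2K+3)$ of Prop.~\ref{prop:bounded_perturbation}, together with a uniform-neighborhood argument, is essential. A secondary point to verify is the asymptotic decorrelation of the unstructured MLEs, which justifies the block-diagonal covariance used above.
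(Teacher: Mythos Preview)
Your proposal is correct and mirrors the paper's own proof: both identify the closed-form CHS minimizer $\pitch_0 = \sum_k k\trueamp_k^2\truefreq_k/\sum_k k^2\trueamp_k^2$ via Prop.~\ref{prop:bounded_perturbation}, invoke consistency of the unstructured sinusoidal MLEs, and then apply the delta method (first-order Taylor expansion) with the standard asymptotic block-diagonal covariance of the amplitude and frequency estimates to obtain \eqref{eq:chs_var}. Your treatment is in fact somewhat more explicit than the paper's, spelling out the partial derivatives, the identity $\truefreq_k Q - kP = \sum_\ell \ell\trueamp_\ell^2(\ell\truefreq_k - k\truefreq_\ell)$, and the need for the frozen-assignment argument to justify the smooth closed form for the estimated quantities.
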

\begin{proof}
See the appendix.
\end{proof}
\begin{remark}
It may be noted that the first term of \eqref{eq:chs_var} is identical to the asymptotic CRLB for the pitch in a perfectly harmonic model \cite{ChristensenJJ07_15}, whereas the second term is related to deviations of the frequencies from perfect harmonicity. Further, in the perfectly harmonic case, i.e., $\truefreq_k = k\pitch_0$, for all $k$, the second term of \eqref{eq:chs_var} is equal to zero, i.e., the CHS plug-in estimate has the same asymptotic performance as the MLE for the perfectly harmonic model. This is not surprising; in the perfectly harmonic case and under the assumption of additive Gaussian white noise, the criterion $q_L$ minimized by the CHS is asymptotically equivalent to the EXIP cost function corresponding to finding an estimate of the pitch from a set of unstructured frequency estimates \cite{LiSL00_80}. Furthermore, the EXIP estimate has been shown to asymptotically have the same performance as the MLE \cite{Stoica1989}.
\end{remark}
%
%
\section{Stochastic representation}\label{sec:stochastic_def}
As may be noted, the parameters of the inharmonic signal in \eqref{eq:noiseless_sine_model} have in Defs.~\ref{def:l2_pitch} and \ref{def:omt_spectrum} been assumed to be deterministic (although the spectral representation allows for, e.g., the initial phases to be random), and, as a consequence, the corresponding harmonic structure has been defined in terms of approximations. In the third and last definition presented herein, we take an alternative approach and view \eqref{eq:noiseless_sine_model} as a realization of a stochastic process in which the frequency parameters are random. Specifically, for the frequencies $\truefreq_k$ in \eqref{eq:inharm_freqs}, we let $\pitch_0$ be a deterministic parameter, whereas $\inharm_k$ are independent zero-mean random variables so that
\begin{align*}
	\expop\left( \truefreq_k \right) = \expop\left( k\pitch_0 + \inharm_k \right) = k\pitch_0, \;\text{for }k = 1,\ldots,K.
\end{align*}
With this, $\pitch_0$ may be interpreted as the fundamental frequency in an average sense\footnote{Note, however, that the expectation of $x_t$ is not a periodic waveform.}. In order to arrive at a model allowing for manipulations, we will herein further assume that the inharmonicity parameters may be well modeled as Gaussian random variables, i.e., $\inharm_k \in \mathcal{N}(0,\inharmvar)$, where $\inharmvar$ denotes the common variance\footnote{This choice is made to simplify the exposition. However, all results herein may be readily extended to allowing the variance to differ among the inharmonicity parameters.}. It may be noted that this representation constitutes a weaker, less structured assumption on the nature of the inharmonic deviations than that provided by, e.g., the stiff string model in \eqref{eq:string_model}, and may therefore be used to model inharmonic signals for which there is no known parametric description. We summarize this in the following definition.
%

\begin{definition}[Expected harmonic signal] \label{def:stochastic_def}
The signal in \eqref{eq:noiseless_sine_model} is a realization of the stochastic process
\begin{align} \label{eq:stochastic_harmonic_sines}
	\signal_t =  \sum_{k=1}^K \trueamp_k e^{i\truephase_k + i\left( k\pitch_0 + \inharm_k \right) t},
\end{align}
where the pdf of $\inharm = \left[\begin{array}{ccc} \inharm_1 & \ldots & \inharm_K \end{array} \right]^T$ is given by
\begin{align*}
	p(\inharm)&=\frac{1}{(2\pi \inharmvar)^{K/2}}\exp\left(-\frac{1}{2\inharmvar} \norm{\Delta}_2^2  \right).
\end{align*}
The corresponding harmonic signal is given by \eqref{eq:harm_waveform}.
\end{definition}
%
It may be noted that there is an interesting relation between Defs.~\ref{def:omt_spectrum} and \ref{def:stochastic_def}. Specifically, for a given realization of $x_t$, corresponding to an observation of $\truefreq_k$, the waveform may be explained perfectly by matching amplitudes, initial phases, and by making any choice of $\pitch_0$ and $\inharm$ such that
\begin{align*}
	k\pitch_0 + \inharm_k = \truefreq_k, \;\text{for }k =1,\ldots,K.
\end{align*}
However, the choice that is most likely, in the sense of maximizing $p(\inharm)$, is given by
\begin{align*}
	\pitch_0 = \argminwrt[\omega] \sum_{k=1}^K (k\omega - \truefreq_k)^2 = \frac{\sum_{k=1}^K k \truefreq_k}{\sum_{k=1}^K k^2},
\end{align*}
and $\inharm_k = \truefreq_k - \pitch_0/k$, for all $k$. That is, the estimate of $\mu_t$ corresponds to a variation of Def.~\ref{def:omt_spectrum} in which the cost of transport is not related to the component power, i.e., as if $\trueamp_k = 1$, for all $k$. It may also be noted that this model is identifiable for all finite $\inharmvar$, i.e., one is not forced to identify one of the frequencies $\truefreq_k$ with an integer multiple of $\pitch_0$ in order for the model to be well-defined.

Considering the noisy observation model in \eqref{eq:measurement_equation}, we may proceed by asking what type of performance bounds and estimators that are relevant for the signal in Def.~\ref{def:stochastic_def}. To this end, let $\paraminharm = \left[\begin{array}{cc} \theta^T & \inharm^T \end{array}\right]^T$ denote the concatenation of the deterministic vector $\theta$ and the stochastic vector $\inharm$. Furthermore, let $\bx(\paraminharm)$ denote the vector consisting of the $N$ signals samples in \eqref{eq:stochastic_harmonic_sines}, parametrized by $\paraminharm$. Then, assuming that the measurement noise is independent of $\inharm$, the joint pdf of the measurement $\by$ and the inharmonicity $\inharm$ is given by
\begin{align} \label{eq:hybrid_pdf}
	p(\by,\inharm;\theta) = p(\by \mid \inharm;\theta) p(\inharm)
\end{align}
where
\begin{align*}
	p(\by \mid \inharm;\theta) &= \frac{1}{(\pi\truevar)^N} \exp\left(-\frac{1}{\truevar} \norm{\by-\bx(\theta,\inharm)}_2^2  \right)
\end{align*}
is the conditional pdf of the measurement. From this, it may be noted that it is not trivial to compute the CRLB for $\theta$, nor to derive the MLE, as the marginal density $p(\by;\theta)$ is not available; this requires computing a $K$ dimensional integral with a non-linear integrand. Also, the Bayesian CRLB is not applicable in this case as prior distributions are only available for a subset of the parameters. However, it is possible to find a performance bound for the model in \eqref{eq:hybrid_pdf} by means of the HCRLB \cite{Messer06,NoamM09_57}. Furthermore, this bound may, as we will see in the numerical section, be asymptotically attained by a hybrid ML/MAP estimator \cite{Yeredor00_48}. The following result adapted from \cite{Messer06} holds.
\begin{proposition}[Hybrid Cram\'er-Rao lower bound \cite{Messer06}]
Let $\hat{\paraminharm}$ be an unbiased estimator of $\paraminharm$ in the sense that $\expop_{\by,\inharm}( \hat{\theta}) = \theta$ and $\expop_{\by,\inharm}( \hat{\inharm} ) = \expop_{\inharm}( \inharm)$, for any $\theta$. Then,
\begin{align}
	\expop_{\by,\inharm}\left( (\hat{\paraminharm}-\paraminharm)(\hat{\paraminharm}-\paraminharm)^T  \right) \succeq \breve{\fim}(\paraminharm)^{-1}
\end{align}
where
\begin{align} \label{eq:hybrid_hcrlb_fim}
	\breve{\fim} = \expop_{\by,\inharm}\Big( \nabla_{\paraminharm}\log p(\by,\Delta;\theta) \: \nabla_{\paraminharm}\log p(\by,\Delta;\theta)^T  \Big).
\end{align}
Here, $\expop_{\by,\inharm}$ and $\expop_{\inharm}$ denote expectation with respect to the joint pdf of $\by$ and $\inharm$, and the marginal pdf of $\inharm$, respectively.
\end{proposition}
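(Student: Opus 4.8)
The plan is to follow the classical covariance-inequality derivation of the Cram\'er--Rao bound, modified to accommodate the fact that $\paraminharm$ mixes a deterministic block $\theta$ with a stochastic block $\inharm$. Writing the hybrid score as $v = \nabla_{\paraminharm}\log p(\by,\inharm;\theta)$, I would stack it together with the estimation error $\hat{\paraminharm}-\paraminharm$ into the single vector $z = [(\hat{\paraminharm}-\paraminharm)^T \;\; v^T]^T$. Since $\expop_{\by,\inharm}(zz^T)$ is a Gram-type matrix it is positive semidefinite, and its blocks are the error covariance $C = \expop_{\by,\inharm}((\hat{\paraminharm}-\paraminharm)(\hat{\paraminharm}-\paraminharm)^T)$, the cross term $U = \expop_{\by,\inharm}((\hat{\paraminharm}-\paraminharm)v^T)$, and the hybrid information matrix $\breve{\fim} = \expop_{\by,\inharm}(vv^T)$. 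Taking the Schur complement of the $\breve{\fim}$-block then yields $C \succeq U\breve{\fim}^{-1}U^T$, so the entire result reduces to the single identity $U = I$.

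The core of the argument is therefore to verify, block by block, that $U$ equals the identity. First I would record that the score is zero-mean, $\expop_{\by,\inharm}(v)=0$, which follows by differentiating $\int\!\!\int p(\by,\inharm;\theta)\,d\by\,d\inharm = 1$ under the integral sign (for the $\theta$-components) and, for the $\inharm$-components, by integrating $\partial_{\inharm_j}p$ first over $\inharm_j$ so that the boundary terms of the Gaussian prior vanish; this makes $\breve{\fim}$ genuinely the covariance of the score. For the deterministic--deterministic block I would differentiate the unbiasedness relation $\expop_{\by,\inharm}(\hat\theta)=\theta$ with respect to $\theta$, using $\partial_{\theta_j}p = p\,\partial_{\theta_j}\log p$ and the fact that $\hat\theta$ is a function of $\by$ alone, to obtain $\expop_{\by,\inharm}(\hat\theta_i\,\partial_{\theta_j}\log p)=\delta_{ij}$; subtracting the zero-mean score gives the identity on this block. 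The two mixed blocks vanish: $\expop_{\by,\inharm}((\hat\theta_i-\theta_i)\partial_{\inharm_j}\log p)=0$ because integrating $\partial_{\inharm_j}p$ over $\inharm$ kills the boundary, and $\expop_{\by,\inharm}((\hat\inharm_i-\inharm_i)\partial_{\theta_j}\log p)=0$ because the marginal $p(\inharm)$ does not depend on $\theta$ and $\expop_{\by}(\hat\inharm)$ is constant in $\theta$ by the unbiasedness convention $\expop_{\by,\inharm}(\hat\inharm)=\expop_{\inharm}(\inharm)$.

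The remaining, and most delicate, block is the stochastic--stochastic one. Here I would split $\expop_{\by,\inharm}((\hat\inharm_i-\inharm_i)\partial_{\inharm_j}\log p)$ into the $\hat\inharm_i$ part, which vanishes exactly as in the mixed case, and the $\inharm_i$ part, $\expop_{\by,\inharm}(\inharm_i\,\partial_{\inharm_j}\log p)$. Integrating the latter over $\by$ reduces it to $\int \inharm_i\,\partial_{\inharm_j}p(\inharm)\,d\inharm$ against the Gaussian prior, and one integration by parts in $\inharm_j$ turns this into $-\int \partial_{\inharm_j}(\inharm_i)\,p(\inharm)\,d\inharm = -\delta_{ij}$; the sign flip against $(\hat\inharm_i-\inharm_i)$ then delivers $+\delta_{ij}$, completing $U=I$. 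I expect the main obstacle to be purely technical rather than conceptual: one must justify the regularity conditions---differentiation under the integral sign, and the vanishing of all boundary terms---so that these manipulations are valid. These are guaranteed here by the smoothness and rapid decay of the Gaussian conditional density $p(\by\mid\inharm;\theta)$ and of the Gaussian prior $p(\inharm)$, and are precisely the hybrid-CRLB regularity hypotheses of \cite{Messer06}. With $U=I$ established, the Schur complement of the positive semidefinite Gram matrix gives $C\succeq\breve{\fim}^{-1}$, as claimed.
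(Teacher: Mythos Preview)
Your argument is correct: the Schur-complement route via the stacked vector $z=[(\hat{\paraminharm}-\paraminharm)^T,\,v^T]^T$, reduction to $U=I$, and the block-by-block verification using differentiation of the unbiasedness identities together with integration by parts in the Gaussian prior is exactly how hybrid bounds of this type are established. The only cosmetic slip is the phrase ``$\expop_{\by}(\hat{\inharm})$ is constant in $\theta$''; what you use (and what the hypothesis gives) is that $\expop_{\by,\inharm}(\hat{\inharm})=\expop_{\inharm}(\inharm)$ is constant in $\theta$, so the derivative vanishes.

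The paper itself does not prove this proposition at all: it is stated as a result ``adapted from \cite{Messer06}'' and invoked as a black box, with the effort going instead into computing $\breve{\fim}$ explicitly for the inharmonic model (Proposition~\ref{prop:hcrlb}). So your write-up is strictly more than what the paper provides---a self-contained derivation rather than a citation. What the citation approach buys is brevity and an implicit appeal to the regularity framework already worked out in the HCRLB literature; what your approach buys is that the reader sees precisely where each unbiasedness hypothesis and the Gaussian decay of $p(\inharm)$ enter, and that no unstated assumptions are being smuggled in.
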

For non-linear measurement models, such as the one considered herein, the HCRLB is in general only tight asymptotically \cite{NoamM09_57}. However, it can be shown that if the bound is tight, then it is attained by the hybrid ML/MAP estimator \cite{NoamM09_57}. This property makes the HCRLB attractive, especially for the case considered herein, where it may be computed easily, as detailed in the following proposition. Furthermore, the ML/MAP allows for straightforward implementation.
\begin{proposition} \label{prop:hcrlb}
For the inharmonic pitch model in \eqref{eq:hybrid_pdf}, the matrix $\breve{\fim}$ in \eqref{eq:hybrid_hcrlb_fim} defining the HCRLB is given by 
\begin{align}
	\breve{\fim} = \expop_\inharm\left( \fim(\paraminharm)  \right) + \begin{bmatrix}
		0 & 0 \\
		0& \frac{1}{\inharmvar} I
	\end{bmatrix}
\end{align}
where
\begin{align*}
	\fim(\paraminharm) =\frac{2}{\truevar}\!\sum_{t=0}^{N-1}\!\nabla_{\paraminharm} \missignal_t^\rreal(\paraminharm) 	\nabla_{\paraminharm} \missignal_t^\rreal(\paraminharm)^T\!+\!\nabla_{\paraminharm} \missignal_t^\iimag(\paraminharm) \nabla_{\paraminharm} \missignal_t^\iimag(\paraminharm)^T
\end{align*}
and $I$ denoting the identity matrix of size $K\times K$. A detailed expression for $ \expop_\inharm( \fim(\paraminharm))$ may be found in the appendix. 
\end{proposition}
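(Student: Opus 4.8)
The plan is to exploit the factorization \eqref{eq:hybrid_pdf}, which splits the hybrid score into a conditional part and a prior part, and then to evaluate the expectation in \eqref{eq:hybrid_hcrlb_fim} by conditioning on $\inharm$ before averaging over it. First I would write $\log p(\by,\inharm;\theta) = \log p(\by\mid\inharm;\theta) + \log p(\inharm)$ and differentiate with respect to $\paraminharm = [\theta^T\;\inharm^T]^T$. Because the prior $p(\inharm)$ of Def.~\ref{def:stochastic_def} does not depend on $\theta$, its gradient enters only the lower ($\inharm$) block, so that
\begin{align*}
	\nabla_{\paraminharm}\log p(\by,\inharm;\theta) = \nabla_{\paraminharm}\log p(\by\mid\inharm;\theta) + \begin{bmatrix} 0 \\ \nabla_\inharm\log p(\inharm) \end{bmatrix}.
\end{align*}
Forming the outer product of this sum and applying the tower property $\expop_{\by,\inharm}(\cdot) = \expop_\inharm[\expop_{\by\mid\inharm}(\cdot)]$ produces four groups of terms.

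The pure conditional-score term, once averaged over $\by\mid\inharm$, is by definition the Fisher information of the circularly symmetric Gaussian conditional density $p(\by\mid\inharm;\theta)$, whose mean is $\bx(\paraminharm)$ and whose parameter is the full vector $\paraminharm$; a routine computation of the complex Gaussian score yields exactly $\fim(\paraminharm)$ as stated, and averaging over $\inharm$ gives the term $\expop_\inharm(\fim(\paraminharm))$. The pure prior-score term depends only on $\inharm$, so its inner expectation is trivial; using $\nabla_\inharm\log p(\inharm) = -\inharm/\inharmvar$ from the Gaussian prior together with $\expop_\inharm(\inharm\inharm^T) = \inharmvar I$, this term contributes precisely $\inharmvar^{-1} I$ in the $(\inharm,\inharm)$ block and zero elsewhere, matching the additive block matrix in the claim.

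The only step that is more than bookkeeping is showing that the two cross terms between the conditional score and the prior score vanish, and this is the verification I would regard as the crux. Here I would note that the prior-score factor is constant with respect to $\by$, so in the inner expectation it factors out and multiplies $\expop_{\by\mid\inharm}\big(\nabla_{\paraminharm}\log p(\by\mid\inharm;\theta)\big)$, which equals zero by the zero-mean-score regularity property of the conditional family. This annihilates both the $(\theta,\inharm)$ cross contribution and the prior cross contribution to the $(\inharm,\inharm)$ block, leaving $\expop_\inharm(\fim(\paraminharm))$ to account for the entire conditional part. Assembling the surviving terms gives the stated decomposition; the explicit evaluation of $\expop_\inharm(\fim(\paraminharm))$ then reduces to inserting the Gaussian moments $\expop_\inharm(e^{i\inharm_k t}) = e^{-\inharmvar t^2/2}$ into the averaged outer products of the signal gradients, which I would defer to the appendix.
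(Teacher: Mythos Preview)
Your proposal is correct and follows essentially the same route as the paper: split the score via $\log p(\by,\inharm;\theta)=\log p(\by\mid\inharm;\theta)+\log p(\inharm)$, use the tower property, kill the cross terms with $\expop_{\by\mid\inharm}(\nabla_{\paraminharm}\log p(\by\mid\inharm;\theta))=0$, and identify the two surviving pieces as $\expop_\inharm(\fim(\paraminharm))$ and the prior block $\inharmvar^{-1}I$. The only thing the paper adds beyond your outline is a brief verification that the conditional score is orthogonal to $\partial_{\truevar}\log p(\by\mid\inharm;\theta)$, so that treating $\truevar$ as unknown does not alter the HCRLB for $\paraminharm$; you may wish to include that remark for completeness.
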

\begin{proof}
See the appendix.
\end{proof}
\begin{remark}
Partitioning $\breve{\fim}$ as
\begin{align*}
	\breve{\fim}= \begin{bmatrix}
		F_{\theta,\theta} & F_{\theta,\inharm}^T \\
		F_{\theta,\inharm} & F_{\inharm,\inharm}
	\end{bmatrix}
\end{align*}
where $F_{\theta,\theta}$ is the $(2K+1) \times (2K+1)$ block corresponding to the non-random parameters and $F_{\inharm,\inharm}$ corresponds to the $K$ inharmonicity parameters, it may be noted $F_{\theta,\theta}$ converges to the FIM corresponding to a perfectly harmonic model when letting $\inharmvar \to 0$. Also, $\frac{1}{N}F_{\inharm,\inharm} \approx \frac{1}{N}\frac{1}{\inharmvar} I $ when $\inharmvar \to 0$ and $N$ is reasonably large. 
Noting that the Schur complement of $F_{\inharm,\inharm}$ in $\breve{\fim}$ is $F_{\theta,\theta} -F_{\theta,\inharm}^T F_{\inharm,\inharm}^{-1}F_{\theta,\inharm}$, and noting that $F_{\theta,\inharm}$ converges to a finite matrix as $\inharmvar \to 0$, the top $(2K+1) \times (2K+1)$ block of $\breve{\fim}^{-1}$ converges to $F_{\theta,\theta}^{-1}$, as $F_{\theta,\inharm}^T F_{\inharm,\inharm}^{-1}F_{\theta,\inharm} \to \inharmvar F_{\theta,\inharm}^TF_{\theta,\inharm} \to 0$, when $\inharmvar \to 0$. That is, the HCRLB converges to the CRLB of the perfectly harmonic model as $\inharmvar\to 0$.
\end{remark}
Furthermore, it can be shown that, given some regularity conditions on the pdf $p$ \cite{NoamM09_57}, the HCRLB is asymptotically tight and asymptotically attained by the hybrid ML/MAP estimator, i.e., by $\text{arg max}_{\theta,\inharm}\: p(\by,\inharm;\theta)$. Thus, the bound in Prop.~\ref{prop:hcrlb} constitutes a useful predictor of estimation performance. Next, we present the ML/MAP estimator for the inharmonic pitch model and show that it lends itself to straightforward implementation. From \eqref{eq:hybrid_pdf}, it may be noted the ML/MAP estimate of $(\theta,\truevar,\Delta)$ maximizes the function
\begin{align*}
	\mathcal{L}= -N\log\truevar-\frac{1}{\truevar} \norm{\by-\bx(\paraminharm)}_2^2-\frac{1}{2\sigma^2_\Delta} \norm{\Delta}_2^2,
\end{align*}
which is the log-likelihood of \eqref{eq:hybrid_pdf}, excluding constant terms. It may be noted that we here are required to estimate also the noise variance $\truevar$. In order to formulate the ML/MAP estimator, define the dictionary function $A: \RR^K \to \RC^{N \times K} $
\begin{align*}
	A(\bomega) = \begin{bmatrix}	a(\omega_1) & \ldots &a(\omega_K) \end{bmatrix}
\end{align*}
where $\bomega = \left[ \begin{array}{ccc} \omega_1 & \ldots & \omega_K \end{array} \right]^T$ and $a: \RR \to \RC^N$ is the Fourier vector.
\begin{proposition}[ML/MAP estimator] \label{prop:map_estimator}
Let $\bomega$ be the set of frequencies maximizing the function
\begin{align*}
	\mapcriterion(\bomega) = -N\log \Sigma(\bomega) - \frac{1}{2\sigma^2_\Delta} \nu(\bomega),
\end{align*}
where
\begin{align*}
	\Sigma(\bomega) &=  \frac{1}{N}\norm{\by-A(\bomega) \left(A(\bomega)^HA(\bomega)  \right)^{-1}A(\bomega)^H\by}_2^2, \\
	\nu(\bomega) &= \sum_{k=1}^K \left( \omega_k - k\frac{\sum_{\ell=1}^K \ell \omega_\ell}{\sum_{\ell=1}^K \ell^2}  \right)^2.
\end{align*}
Then, the ML/MAP estimates of the fundamental frequency and inharmonicity parameters are given by
\begin{align*}
	\omega_0 = \frac{\sum_{k=1}^K k \omega_k}{\sum_{k=1}^K k^2}
\end{align*}
and $\inharm_\ell = \omega_\ell - \ell\omega_0$, for $\ell=1,\ldots,K$, respectively.
\end{proposition}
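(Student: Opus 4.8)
The plan is to recognize the ML/MAP maximization as a profile-likelihood (concentration) computation, exploiting a reparametrization in which the nuisance parameters decouple. First I would reparametrize the signal in terms of the \emph{actual} component frequencies $\bomega = [\omega_1 \; \ldots \; \omega_K]^T$, where $\omega_k = k\pitch_0 + \inharm_k$, together with complex amplitudes $b_k = \trueamp_k e^{i\truephase_k}$ collected in $\mathbf{b}$. With this substitution the signal vector becomes $\bx(\paraminharm) = A(\bomega)\mathbf{b}$, and the map $(\pitch_0,\inharm) \mapsto (\pitch_0,\bomega)$ defined by $\omega_k = k\pitch_0 + \inharm_k$ is a bijection, so maximizing over $(\pitch_0,\inharm)$ is equivalent to maximizing over $(\pitch_0,\bomega)$ with $\inharm_k = \omega_k - k\pitch_0$.

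The key structural observation, which I would establish next, is that in these coordinates the log-likelihood splits: the data-fit term $-\frac{1}{\truevar}\norm{\by - A(\bomega)\mathbf{b}}_2^2$ depends only on $(\bomega,\mathbf{b},\truevar)$ and is entirely free of $\pitch_0$, while the prior penalty $-\frac{1}{2\inharmvar}\norm{\inharm}_2^2 = -\frac{1}{2\inharmvar}\sum_{k=1}^K (\omega_k - k\pitch_0)^2$ depends only on $(\bomega,\pitch_0)$ and is free of $\mathbf{b}$ and $\truevar$. This separation is what permits concentrating the three nuisance blocks independently.

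I would then carry out the concentration in sequence. Maximizing over the unconstrained complex vector $\mathbf{b}$ gives the least-squares estimate $\hat{\mathbf{b}} = (A(\bomega)^H A(\bomega))^{-1}A(\bomega)^H\by$, from which the amplitudes $\trueamp_k = \abs{b_k}$ and phases $\truephase_k = \arg b_k$ are read off so that the non-negativity constraint is automatically met, leaving the projection residual $N\Sigma(\bomega)$. Maximizing the penalty over $\pitch_0$ is a scalar least-squares fit of $\{\omega_k\}$ to $\{k\pitch_0\}$, yielding $\hat{\pitch}_0 = (\sum_k k\omega_k)/(\sum_k k^2)$ and minimal penalty $\nu(\bomega)$. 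Finally, differentiating the remaining expression $-N\log\truevar - N\Sigma(\bomega)/\truevar - \nu(\bomega)/(2\inharmvar)$ in $\truevar$ gives the standard ML variance $\hat{\truevar} = \Sigma(\bomega)$. Substituting all three back reduces $\mathcal{L}$ to $-N\log\Sigma(\bomega) - \nu(\bomega)/(2\inharmvar)$ up to the additive constant $-N$, which is exactly $\mapcriterion(\bomega)$, and the claimed expressions for $\pitch_0$ and $\inharm_\ell = \omega_\ell - \ell\pitch_0$ follow directly.

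There is no serious analytic obstacle here, as the argument is a sequence of closed-form least-squares minimizations. The only point requiring care is the reparametrization itself: one must verify that treating $\bomega$ rather than $\inharm$ as the optimization variable is legitimate and, in particular, that $\pitch_0$ genuinely disappears from the data-fit term once the frequencies are taken as primitive. Once that decoupling is in place, the independence of the three concentration steps is immediate and the stated form of $\mapcriterion$ drops out.
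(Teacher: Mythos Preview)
Your proposal is correct. The paper does not actually supply a proof of this proposition; it is stated without proof and followed immediately by a remark on implementation and limiting cases. Your concentration argument---reparametrizing via $\omega_k = k\pitch_0 + \inharm_k$ so that the data-fit term depends only on $(\bomega,\mathbf{b},\truevar)$ while the prior depends only on $(\bomega,\pitch_0)$, and then profiling out $\mathbf{b}$, $\pitch_0$, and $\truevar$ in closed form---is precisely the standard derivation the paper evidently had in mind and chose to omit as routine.
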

\begin{remark}
It may be noted that the ML/MAP estimate of the model parameters are found by maximizing $\mapcriterion$ over a set of $K$ unconstrained frequencies, which may be realized by a non-linear search. As $\mapcriterion$ is non-convex, such a search requires a good initial point. In favorable noise conditions, such an initial point may be obtained by simple peak-picking in the periodogram.
\end{remark}
%
%
\begin{figure}[t]
        \centering
            \includegraphics[width=.46\textwidth]{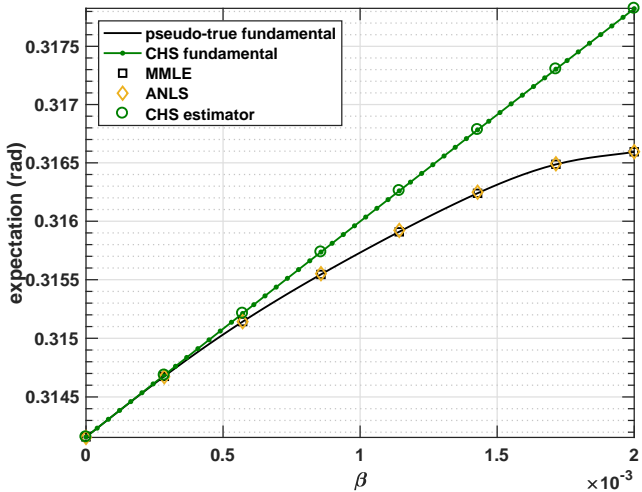}
           \caption{The pseudo-true fundamental frequency in Def.~\ref{def:l2_pitch} as well as the CHS fundamental in Def.~\ref{def:omt_spectrum} for signals generated from the string model in \eqref{eq:string_model} for varying $\beta$. Also plotted are the estimated expectations of the MMLE, ANLS, and CHS estimators.} 
            \label{fig:non_stoch_expectation_vary_beta}
\vspace{-2mm}\end{figure}
%
%
As can be seen from the criterion $\mapcriterion$, one may obtain two extreme cases by letting $\inharmvar \to \infty$ and $\inharmvar \to 0$, respectively. In the former case, maximizing $\mapcriterion$ becomes equivalent to minimizing $\Sigma$ with respect to a set of unconstrained frequencies $\bomega$, i.e., one obtains the MLE for a model with $K$ unrelated sinusoids. In the latter case, one in the limit arrives at the problem
\begin{align*}
	\maxwrt[\bomega] -\Sigma(\bomega) \;,\; \text{s.t. } \nu(\bomega) = 0,
\end{align*}
or, equivalently,
\begin{align*}
	\maxwrt[\omega_0,\bomega] -\Sigma(\bomega) \;,\; \text{s.t. } \omega_k= k\omega_0, \text{ for } k = 1,\ldots,K,
\end{align*}
i.e., the misspecified MLE corresponding to a perfectly harmonic approximation.
%
%
%
\section{Numerical examples} \label{sec:numerical}
In this section, we provide numerical examples illustrating the derived theoretical results.
%
\subsection{Deterministic waveform}
To illustrate the behavior of Def.~\ref{def:l2_pitch}, i.e.,  the best harmonic approximation in $\ell_2$ sense, and the implied MMLE for varying degrees of inharmonicity, we consider signals generated from the string model in \eqref{eq:string_model}. It may here be noted that the MMLE is given by the non-linear least squares (NLS) estimator from \cite{ChristensenSJJ08_88}. In particular, let the signal consist of $K = 5$ components with amplitudes $\trueamp_k = e^{-\rho(k-K/2)^2}$ with $\rho = 0.2$, and let $\pitch_0 = \pi/10$. The initial phases $\truephase_k$ are chosen uniformly random on $[-\pi,\pi)$. Furthermore, let the signal be observed at $N = 500$ time instances, and let the SNR, defined as $\text{SNR} = 10\log_{10} {\sum_{k} \trueamp_k^2}/\truevar$, be 10 dB. For this setting, Figure~\ref{fig:non_stoch_expectation_vary_beta} presents the pseudo-true pitch i.e., the $\pitch_0$ defined in Def.~\ref{def:l2_pitch}, when varying the string stiffness parameter $\beta$ on $[0,2\times 10^{-3}]$. Also presented is the OMT-based CHS $\omega_0$ in Def.~\ref{def:omt_spectrum}. As may be noted, both definitions correspond to small perturbations of the pitch, with the CHS definition displaying a more linear behavior for a larger range of $\beta$. Figure~\ref{fig:non_stoch_expectation_vary_beta} also displays the estimated expected values for the MMLE and CHS estimators, obtained from 2000 Monte Carlo simulations for each value of $\beta$. The CHS pitches are computed based on unstructured ML estimates of the component amplitudes and frequencies in accordance with Prop.~\ref{prop:chs_estimate}. As can be seen, the sample averages correspond well to their theoretical values. Also presented is the average estimate obtained using the approximate NLS (ANLS) estimator \cite{ChristensenSJJ08_88}, which is an asymptotic approximation of the MLE for an harmonic model (as $N\to \infty$), corresponding to harmonic summation from a periodogram estimate. It may here be noted that the expectation of the ANLS estimator coincides with the pseudo-true $\pitch_0$.
%
%
\begin{figure}[t]
        \centering
            \includegraphics[width=.45\textwidth]{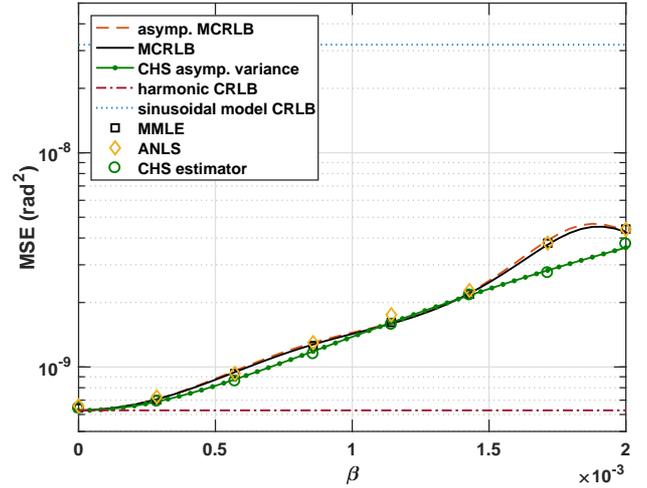}
           \caption{The MCRLB, the asymptotic MCRLB, and the asymptotic CHS estimator variance for estimating the pseudo-true $\pitch_0$ in Def.~\ref{def:l2_pitch} and the CHS pitch in Def.~\ref{def:omt_spectrum}, respectively, when the signal is generated from the string model in \eqref{eq:string_model} for varying $\beta$. Also plotted are the corresponding MSE achieved by the MMLE, ANLS, and CHS estimators.}
            \label{fig:non_stoch_mse_vary_beta}
\vspace{-2mm}\end{figure}
%
%

Furthermore, Figure~\ref{fig:non_stoch_mse_vary_beta} presents the exact and asymptotic MCRLB, as well the asymptotic variance of the CHS estimator, together with the MSE of the MMLE, ANLS, and CHS estimators. The MSE is here computed using the pseudo-true $\pitch_0$ as reference for the MMLE and ANLS, as it corresponds to their expected and asymptotical expected values, respectively, whereas the reference for the CHS estimator is the CHS $\pitch_0$. As may be noted, the bounds provide accurate predictions of the behaviors of the three estimators, with the asymptotic MCRLB coinciding fairly well with its exact counterpart. As reference, the CRLB for the corresponding perfectly harmonic model, as well as for the lowest-frequency component in an unstructured sinusoidal model with exactly the same spectral content, are also provided.
Figures~\ref{fig:non_stoch_expectation_vary_N} and \ref{fig:non_stoch_mse_vary_N} display corresponding quantities, i.e., theoretical and estimated expected values and theoretical and estimated variances, respectively, when fixing $\beta = 5 \times 10^{-4}$ and varying the number of samples, $N$, between 300 and 2000. As can be seen in Figures~\ref{fig:non_stoch_expectation_vary_N}, the CHS $\pitch_0$ does not depend on the sample length, as expected. In contrast, the pseudo-true $\pitch_0$ depends on $N$. As $\trueamp_2 = \trueamp_3$ dominate the amplitudes of the other components, we expect the $\pitch_0$ from Def.~\ref{def:l2_pitch} to fluctuate close to the interval $[\pitch_0\sqrt{1+\beta 2^2},\pitch_0\sqrt{1+\beta 3^2}] \approx [0.3145,0.3149]$ for large but moderate values of $N$ (recall that Def.~\ref{def:l2_pitch} becomes ambiguous as $N\to \infty$). Furthermore, as may be seen from Figure~\ref{fig:non_stoch_mse_vary_N}, the MCRLB is far from being linear in the log of $N$, in contrast to the CHS asymptotic variance. It may here be noted that the slopes for the perfectly harmonic CRLB and the CHS asymptotic variance are different as the second term of \eqref{eq:chs_var} dominates for large $N$.
%
%
\begin{figure}[t]
        \centering
            \includegraphics[width=.46\textwidth]{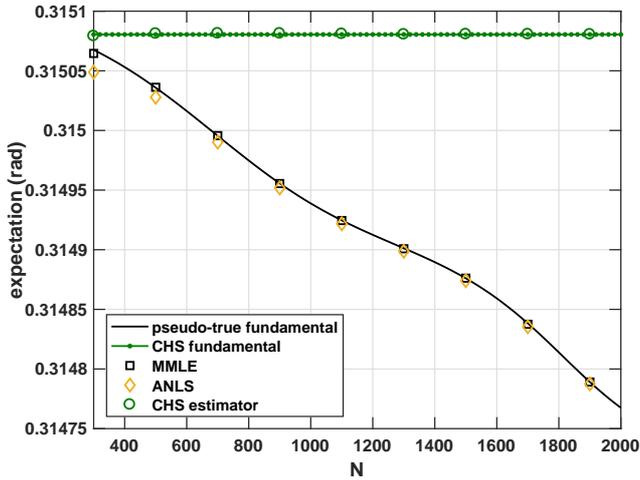}
           \caption{The pseudo-true fundamental frequency in Def.~\ref{def:l2_pitch} as well as the CHS fundamental in Def.~\ref{def:omt_spectrum} for signals generated from the string model in \eqref{eq:string_model} for varying $N$. Also plotted are the estimated expectations of the MMLE, ANLS, and CHS estimators.} 
            \label{fig:non_stoch_expectation_vary_N}
\vspace{-2mm}\end{figure}
%
%
\subsection{Stochastic waveform}
We proceed by extending the simulation study to the stochastic model in \eqref{eq:stochastic_harmonic_sines}. Specifically, we let all signal parameters remain the same, with the difference being that each component frequency is perturbed by a Gaussian random variable. Fixing $N = 500$ and SNR $= 10$ dB, we compute the HCRLB for $\pitch_0$, as well as for the (random) frequency of the first sinusoidal component, i.e., $\truefreq_1 = \omega_0 + \inharm_1$, for varying values of the inharmonicity variance $\inharmvar$. As comparison, we compute the theoretical MSE for the estimators implied by Defs.~\ref{def:l2_pitch} and \ref{def:omt_spectrum}. This is performed by sampling $\inharm_k \in \mathcal{N}\left(0,\inharmvar\right)$, computing the MSE for that particular set of $\left\{\inharm_k\right\}$, and averaging over realizations. Specifically, the MSE for a given $\left\{\inharm_k\right\}$ is computed by adding the squared bias, with references $\pitch_0$ and $\pitch_0 + \inharm_1$, to the covariance bound, i.e., the MCRLB and the CHS asymptotic variance, respectively. For each value of $\inharmvar$, the corresponding MSEs are computed by averaging over 1000 Monte Carlo simulations. Furthermore, for each such simulation, noise is added to the signal waveform according to \eqref{eq:measurement_equation}, and the signal parameters are estimated using MAP/MLE, MMLE, ANLS, and the CHS estimators. The results are displayed in Figures~\ref{fig:stoch_mse_first_omega0_sigma2delta} and \ref{fig:stoch_mse_first_sine_sigma2delta}, with Figure~\ref{fig:stoch_mse_first_omega0_sigma2delta} showing the MSE for the estimate of $\pitch_0$ and Figure~\ref{fig:stoch_mse_first_sine_sigma2delta} the MSE for $\truefreq_1 = \pitch_0 + \inharm_1$.
As may be seen from Figure~\ref{fig:stoch_mse_first_omega0_sigma2delta}, the MAP/MLE estimator is able to attain the HCRLB, which is strictly smaller than the bounds for the estimators derived for the deterministic models.
For reference, Figure~\ref{fig:stoch_mse_first_omega0_sigma2delta} provides also the MCRLB, the theoretical CHS asymptotic variance, as well as the obtained MSEs for estimates of the $\pitch_0$ of Defs.~\ref{def:l2_pitch} and \ref{def:omt_spectrum}. From this, it may be concluded that the variances of the estimators are dominated by their squared bias when estimating $\pitch_0$ from the model in \eqref{eq:stochastic_harmonic_sines}.
Furthermore, it may be noted from Figure~\ref{fig:stoch_mse_first_sine_sigma2delta} that when considering the estimate of the frequency of the first sinusoidal component, the HCRLB, as well as the MSE of the MAP/MLE, tends to the CRLB corresponding to an unstructured sinusoidal model as $\inharmvar$ grows, which is due to negative correlation between the estimates of $\pitch_0$ and $\inharm_1$. In contrast, for large enough inharmonicity, the MSEs of the MMLE, ANLS, and CHS estimators exceed the CRLB of the sinusoidal model.
%
%
%
\begin{figure}[t]
        \centering
            \includegraphics[width=.46\textwidth]{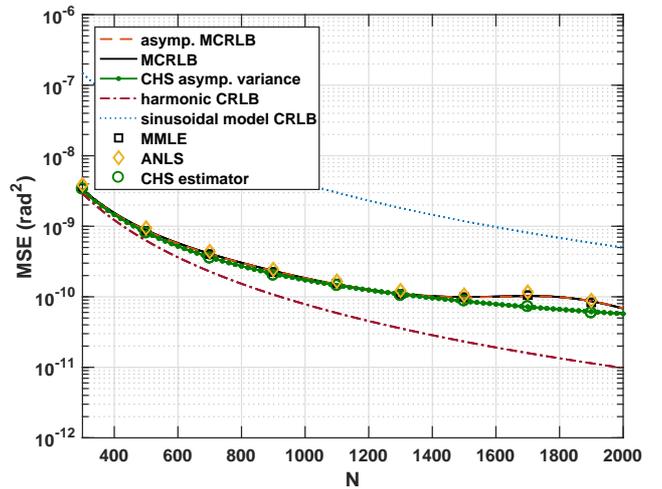}
           \caption{The MCRLB, the asymptotic MCRLB, and the asymptotic CHS estimator variance for estimating the pseudo-true $\pitch_0$ in Def.~\ref{def:l2_pitch} and the CHS pitch in Def.~\ref{def:omt_spectrum}, respectively, when the signal is generated from the string model in \eqref{eq:string_model} for varying $N$. Also plotted are the corresponding MSE achieved by the MMLE, ANLS, and CHS estimators.}
            \label{fig:non_stoch_mse_vary_N}
\vspace{-2mm}\end{figure}
%
%
\vspace{-2mm}
\section{Discussion}\label{sec:discussion}
The three different definitions of the pitch for inharmonic signals all have their merits, and one cannot in a strictly objective sense say that one is better than the others. Instead, their relative usefulness depend on the considered application, as well as on ones aim with approximating close-to-harmonic signals with perfectly harmonic counterparts.

As noted, by defining pitch using an approximation in $\ell_2$, as in Def.~\ref{def:l2_pitch}, one models the scenario of erroneously assuming that the observed waveform is periodic, for which Def.~\ref{def:l2_pitch} provides means for analyzing the average behavior of the misspecified MLE, as well as a bound on the estimation performance. Thus, Def.~\ref{def:l2_pitch} is well suited for being used as a benchmark when analyzing the behavior of pitch estimators when applied to inharmonic signals; the definition makes it possible to compute the bias and MSE, at least empirically, for any such estimator. However, the behavior of Def.~\ref{def:l2_pitch} is, as demonstrated in the numerical examples, non-linear with respect to both the inharmonicity and sample length, and is in addition ambiguous for very long signals.

In this respect, the OMT-based approximation of Def.~\ref{def:omt_spectrum} has the appealing quality of not depending on the actual signal, but only on its spectral properties. As shown both in the theoretical and numerical results, the behavior of Def.~\ref{def:omt_spectrum} is locally linear with respect to the signal inharmonicity, and corresponds well to the intuitive idea that inharmonicity corresponds to perturbations in frequency of spectral peaks. In addition to this, when used for estimation, Def.~\ref{def:omt_spectrum} displays linear behavior in terms of estimator variance, and coincides with the MLE for the perfectly harmonic case. With this in mind, Def.~\ref{def:omt_spectrum} is more satisfactory than Def.~\ref{def:l2_pitch} when considered a tool for understanding inharmonic signals. Furthermore, its connection to the EXIP framework, i.e., the idea of fitting given parameter estimates to a certain structure, makes it relevant as a benchmark also in practical estimation scenarios.
%
%
\begin{figure}[t]
        \centering
            \includegraphics[width=.46\textwidth]{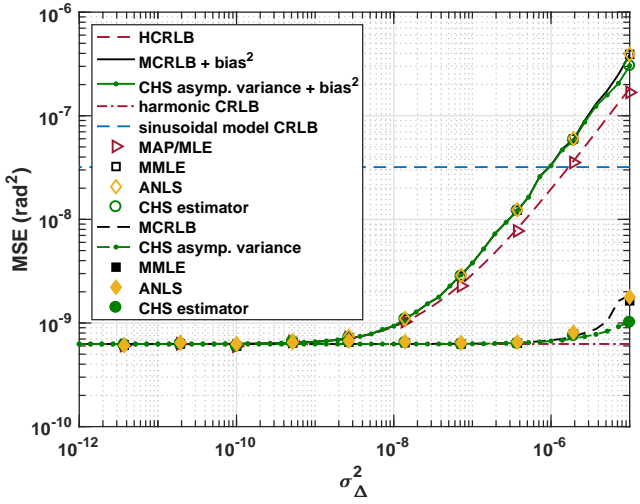}
           \caption{The MSE obtained when estimating $\pitch_0$ from the model in \eqref{eq:stochastic_harmonic_sines} for varying values of the inharmonicity parameter variance $\inharmvar$. Also given are MSEs for the estimates of the fundamental frequencies in Defs.~\ref{def:l2_pitch} and \ref{def:omt_spectrum}.} 
            \label{fig:stoch_mse_first_omega0_sigma2delta}
\vspace{-2mm}\end{figure}
%

In contrast to Defs.~\ref{def:l2_pitch} and \ref{def:omt_spectrum}, which constitute approximations, Def.~\ref{def:stochastic_def} views the inharmonic signal as a waveform resulting from perturbing the frequencies of the sinusoidal components by zero-mean random variables, causing them to deviate from perfect integer multiples of the nominal pitch. In this respect, Def.~\ref{def:stochastic_def} offers an explanation of why the observed signal waveform is not periodic. However, as any given observation of the signal has been generated by this random procedure, one by definition cannot compute the $\pitch_0$ in Def.~\ref{def:stochastic_def} from such an observation. However, adopting this view of inharmonic signals allows for computing a performance bound constituting a smooth interpolant between perfectly harmonic models and completely unstructured sinusoidal models, as well as for finding an easily implemented estimator attaining the bound. In practical terms, Def.~\ref{def:stochastic_def} offers tools for processing inharmonic signals for which one suspects that there may be no deterministic description of the inharmonic deviations. For example, such a class of signals could be the voiced part of human speech, for which one may observe frequency-dependent inharmonicity with no particular structure (this may very well change over time for a given person due to, e.g., infections). As the inharmonicity pattern thus could change from day to day for any given fundamental frequency, Def.~\ref{def:stochastic_def} likely provides a pragmatic view.
%
\begin{figure}[t]
        \centering
            \includegraphics[width=.46\textwidth]{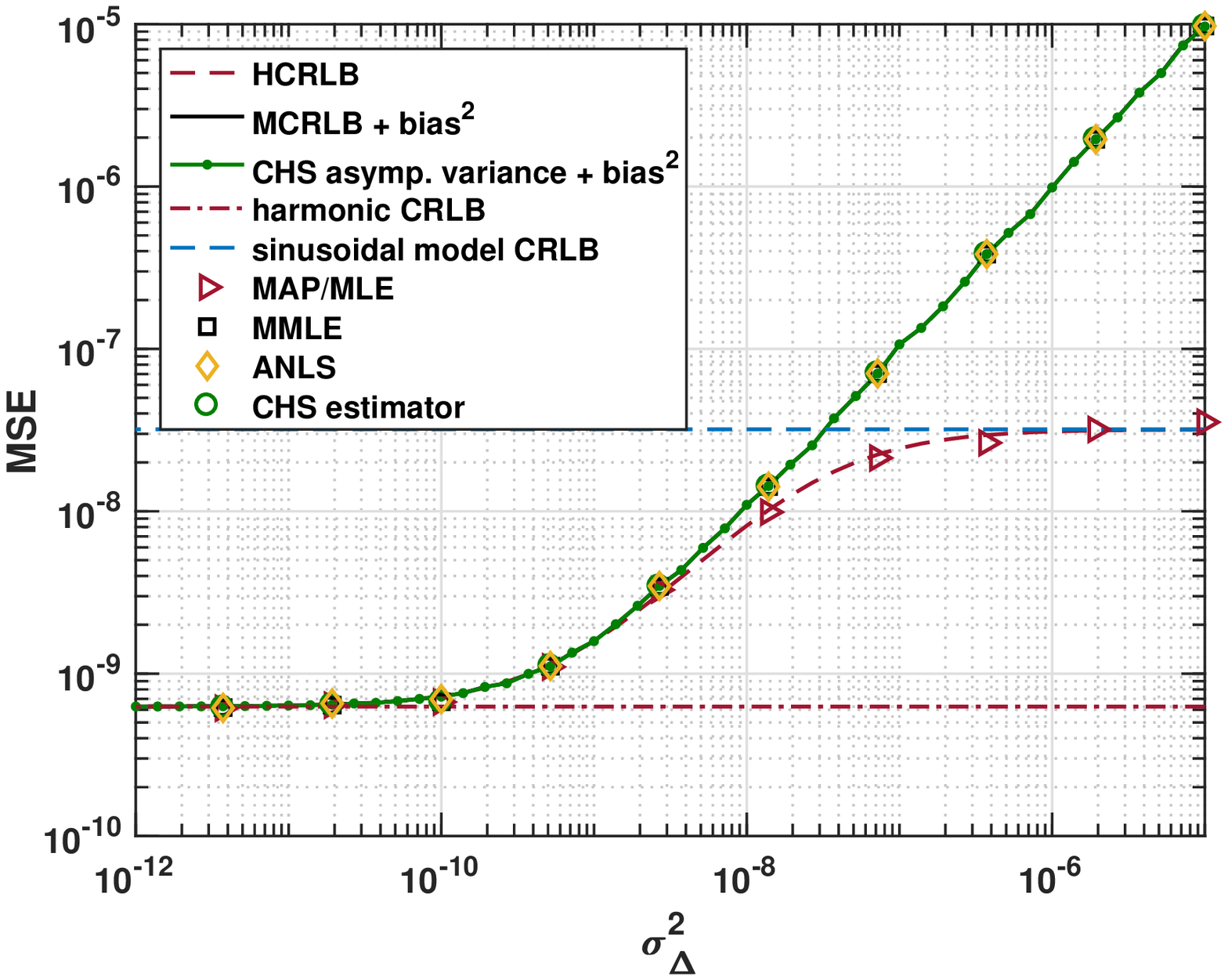}
           \caption{The MSE obtained when estimating the frequency $\truefreq_1 = \pitch_0 + \inharm_1$ of the first sinusoidal component from the model in \eqref{eq:stochastic_harmonic_sines} for varying values of the inharmonicity parameter variance $\inharmvar$.} 
            \label{fig:stoch_mse_first_sine_sigma2delta}
\vspace{-2mm}\end{figure}
%
\newpage
\appendix
\subsection{Information matrix for the HCRLB}
As $\expop_{\inharm_k}\left( e^{i \inharm_k t} \right) = e^{-\frac{1}{2}\inharmvar t^2}$, one may write $\expop_\inharm\left( \fim(\paraminharm)  \right) = \sum_{t=0}^{N-1} \Lambda^{(t)}$, where the elements of the matrices
\begin{align*}
	\Lambda^{(t)} \triangleq \expop_\inharm\left( \!\nabla_{\paraminharm} \missignal_t^\rreal(\paraminharm) 	\nabla_{\paraminharm} \missignal_t^\rreal(\paraminharm)^T\!+\!\nabla_{\paraminharm} \missignal_t^\iimag(\paraminharm) \nabla_{\paraminharm} \missignal_t^\iimag(\paraminharm)^T\right),
\end{align*}
for $t = 0,\ldots,N-1$, may be expressed as
\begin{align*}
	\Lambda^{(t)}_{\omega_0,\omega_0} \!&=\!  t^2 \!\sum_{k}\!k^2\trueamp_k^2 \\
	&\!+\!t^2e^{-\inharmvar t^2}\!\sum_{(k,\ell): k\neq \ell}\!k\ell \trueamp_k \trueamp_\ell\cos\left( \!\pitch(k-\ell)t+\!\truephase_k\!-\!\truephase_\ell \! \right) \\
	\Lambda^{(t)}_{\omega_0,r_k} &=  t e^{-\inharmvar t^2}\sum_{\ell: \ell\neq k} \ell\trueamp_\ell\sin\left( \pitch(k-\ell)t +\truephase_k-\truephase_\ell  \right)\\
	\Lambda^{(t)}_{\omega_0,\phi_k}\!&=\!t k\trueamp_k^2\!+\!te^{-\inharmvar t^2}\!\sum_{\ell: \ell\neq k}\! \ell \trueamp_k \trueamp_\ell \!\cos\!\left( \!\pitch(k\!-\!\ell)t\!+\!\truephase_k\!-\!\truephase_\ell \!\right) \\
	\Lambda^{(t)}_{r_k,r_\ell} &= \begin{cases} 1, & \ell = k\\ e^{-\inharmvar t^2} \cos\left( \pitch(k-\ell)t+\truephase_k-\truephase_\ell  \right), &\ell \neq k \end{cases} \\
	\Lambda^{(t)}_{r_k,\phi_\ell} &= \begin{cases} 0, & \ell = k\\ e^{-\inharmvar t^2} \sin\left( \pitch(k-\ell)t+\truephase_k-\truephase_\ell  \right), &\ell \neq k \end{cases} \\
	\Lambda^{(t)}_{\phi_k,\phi_\ell} &= \begin{cases} \trueamp_k^2, & \ell = k\\ e^{-\inharmvar t^2}\trueamp_k\trueamp_\ell \cos\!\left( \!\pitch(k-\ell)t\!+\!\truephase_k\!-\!\truephase_\ell  \right), &\ell\neq k \end{cases}
\end{align*}
and \mbox{$\Lambda^{(t)}_{\eta,\inharm_k} \!=\! t\Lambda^{(t)}_{\eta,\phi_k}$, for $\eta \!\in\! \{\! \omega_0,\! r_k,\! \phi_k\!\}$, and \mbox{$\Lambda^{(t)}_{\inharm_k,\inharm_\ell} \!=\! t^2 \!\Lambda^{(t)}_{\phi_k,\phi_\ell}$}.}
\subsection{Proof of Proposition~\ref{prop:asymp_mcrlb}}
By Lemma~\ref{eq:lemma_arrowhead} below, $\frac{1}{N} \fim(\misparam_0)$ and $\frac{1}{N} A(\misparam_0)$ converge to arrowhead matrices. As shorthands, let $\fim = \fim(\misparam_0)$ and $A = A(\misparam_0)$. Then,
\begin{align*}
	A = -\frac{1}{\var} \begin{bmatrix}
		\firstelement & z^T \\
		z & \text{diag}(d)
	\end{bmatrix} + \mathcal{O}(1),
\end{align*}
where $\mathcal{O}(1)$ denotes a bounded matrix and where $\text{diag}(d)$ denotes the diagonal matrix with $d$ as its main diagonal, $\firstelement = \firstelement_\missignal+ \firstelement_\wavediff$ and $z = z_\missignal + z_\wavediff$ with
\begin{align*}
	&\firstelement_\missignal= 2\sum_{t=0}^{N-1} \abs{\frac{\partial \missignal_t}{\partial \pitch_0}}^2 \;,\; \firstelement_\wavediff = 2\sum_{t=0}^{N-1} \real\left( \overline{\wavediff_t} \frac{\partial^2 \missignal_t}{\partial \pitch_0^2}\right), \\
	&d = 2\sum_{t=0}^{N-1}\nabla_\misscale \missignal_t^\rreal \odot \nabla_\misscale \missignal_t^\rreal +\nabla_\misscale \missignal_t^\iimag \odot \nabla_\misscale \missignal_t^\iimag, \\
	&z_\missignal= 2\sum_{t=0}^{N-1}\real\left(\overline{\frac{\partial \missignal_t}{\partial \pitch_0}} \nabla_\misscale \missignal_t \right) \:,\: z_\wavediff  \!= \!2\sum_{t=0}^{N-1} \real\left(\overline{\wavediff_t} \frac{\partial}{\partial \pitch_0}\nabla_\misscale \missignal_t \right),
\end{align*}
where $\odot$ denotes the Hadamard product, and
\[
\alpha = \left[ \begin{array}{cccccc} \phase_1 & \ldots & \phase_K & \amp_1 & \ldots & \amp_K \end{array} \right]^T,
\]
with all derivatives being evaluated at $\misparam = \misparam_0$. Then, by the Sherman-Morrison-Woodbury formula \cite{GolubV96}, $A^{-1}$ may be written as
\begin{align*}
	A^{-1} = -\var\begin{bmatrix}
		0 & 0 \\ 0 & \text{diag}(d)^{-1}
	\end{bmatrix} - \frac{\var}{\rho}uu^T + \mathcal{E},
\end{align*}
where $u = u_\missignal + u_\wavediff$, with $\rho = \firstelement - z^T(z./ d)$ and
\begin{align*}
	 u_\missignal = \left[ \begin{array}{cc} -1 & (z_\missignal./ d)^T \end{array} \right]^T \:,\: u_\wavediff = \left[ \begin{array}{cc} 0 & (z_\wavediff./ d)^T \end{array} \right]^T,
\end{align*}
where $./$ denotes elementwise division, and where the error matrix $\mathcal{E}$ is structured as
\begin{align*}
	\mathcal{E} = \begin{bmatrix}
		\mathcal{E}_{1} & \mathcal{E}_2^T \\
		\mathcal{E}_2 & \mathcal{E}_3
	\end{bmatrix}
\end{align*}
where $\mathcal{E}_{1}$ is a scalar on the order $\mathcal{O}(N^{-4})$, $\mathcal{E}_{2}$ is a $2K$ vector on the order $\mathcal{O}(N^{-3})$, and $\mathcal{E}_{3}$ is a $2K\times 2K$ matrix on the order $\mathcal{O}(N^{-2})$. Thus, as $\fim$ is given by
\begin{align*}
	\fim = \frac{\truevar}{(\var)^2}\begin{bmatrix}
		\firstelement_\missignal & z_\missignal^T \\
		z_\missignal & \text{diag}(d)
	\end{bmatrix} + \mathcal{O}(1),
\end{align*}
straightforward calculations yield that the error incurred by neglecting the bounded terms of $A$ and $\fim$ may be bounded from above by a matrix tending to zero faster than $A^{-1}\fim A^{-1}$ by a factor $1/N$. Then, the MCRLB corresponding to $\pitch_0$ is given by the first diagonal element of $A^{-1}\fim A^{-1}$, which for large $N$ is given by
\begin{align*}
	\frac{(\var)^2}{\rho^2} u^TFu &= (\var)^2\frac{1}{\rho^2} \left( u_\missignal^TFu_\missignal + u_\wavediff^TFu_\wavediff \right) \\
	&= \var \frac{C + E}{\left( C - E + Z + D \right)^2},
\end{align*}
where it used that $u_\missignal^T \fim u_\wavediff = 0$, and
\begin{align*}
	C &= \firstelement_\missignal - z_\missignal^T(z_\missignal./d) \;,\; E = z_\wavediff^T(z_\wavediff./d),\\
	D &= -2z_\missignal^T(z_\wavediff./d) \;,\; Z = \firstelement_\wavediff.
\end{align*}
Assuming that the pseudo-true fundamental frequency is not too close to zero, the correlation between signal components corresponding to different harmonic orders tends to zero as $N\to \infty$. The asymptotic expressions for $C, E, D, $ and $Z$ stated in the proposition follow directly, together with the approximation error on the order $\mathcal{O}(N^{-4})$ for large $N$.
\hfill $\square$
%
\begin{lemma} \label{eq:lemma_arrowhead}
As $N\to \infty$, $\frac{1}{N} \fim(\misparam_0)$ and $\frac{1}{N} A(\misparam_0)$ converge to arrowhead matrices.
\end{lemma}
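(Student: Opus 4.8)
The plan is to reduce every entry of $\fim(\misparam_0)$ and $A(\misparam_0)$ to the real part of a weighted exponential sum, and then to show that all entries lying \emph{off} the arrowhead pattern (everything outside the first row, first column, and main diagonal once $\pitch_0$ is placed first) are of strictly lower order in $N$ than the diagonal, so that they vanish after division by $N$. Using the identity $a^\rreal b^\rreal + a^\iimag b^\iimag = \real(a\overline{b})$, the $(i,j)$ entry of $\fim$ equals $\tfrac{2\truevar}{(\var)^2}\sum_{t=0}^{N-1}\real\!\big(\partial_{\misparam_i}\missignal_t\,\overline{\partial_{\misparam_j}\missignal_t}\big)$, with an analogous expression for $\extrahessian$ involving $\wavediff_t$ and second derivatives. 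Since $\missignal_t = \sum_\ell \amp_\ell e^{i\phase_\ell + i\pitch_0\ell t}$, each first derivative carries a fixed power of $t$: none for $\partial_{\phase_\ell}$ or $\partial_{\amp_\ell}$, and one factor $\ell t$ for $\partial_{\pitch_0}$. Hence every entry is a finite linear combination of terms $t^m e^{i\pitch_0(k-\ell)t}$ with $m\in\{0,1,2\}$.

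The decisive estimate is the following grading of weighted Dirichlet sums: for $\gamma \not\equiv 0 \pmod{2\pi}$,
\begin{align*}
\sum_{t=0}^{N-1} t^m e^{i\gamma t} = \mathcal{O}(N^m),
\qquad\text{whereas}\qquad
\sum_{t=0}^{N-1} t^m = \frac{N^{m+1}}{m+1} + \mathcal{O}(N^m).
\end{align*}
The first bound follows by $m$-fold summation by parts (equivalently, by differentiating $\sum_t e^{i\gamma t} = (1-e^{i\gamma N})/(1-e^{i\gamma})$ in $\gamma$), each factor $t$ raising the order by exactly one while the oscillation keeps the geometric sum bounded. Thus a \emph{matched} contribution ($k=\ell$, so $\gamma=0$) exceeds a \emph{cross} contribution ($k\neq\ell$, so $\gamma=\pitch_0(k-\ell)\neq 0$ whenever $\pitch_0$ is bounded away from $0$ and $|k-\ell|$ is bounded) by one full power of $N$.

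Applying this to $\fim$, the matched terms feed only the corner $(\pitch_0,\pitch_0)\sim N^3$, the first-row couplings such as $(\pitch_0,\phase_k)\sim N^2$, and the $\misscale$-block diagonal $(\phase_k,\phase_k),(\amp_k,\amp_k)\sim N$, all of which lie on the arrowhead (first-row entries are arrowhead-permissible irrespective of their order). Every off-arrowhead entry of the $\misscale$-block is either a cross term, hence $\mathcal{O}(1)$, or the same-harmonic coupling $(\phase_k,\amp_k)$, for which $\partial_{\phase_k}\missignal_t\,\overline{\partial_{\amp_k}\missignal_t}=i\amp_k$ has zero real part and the entry vanishes identically. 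Dividing by $N$ sends all of these to zero, so $\tfrac1N\fim(\misparam_0)$ retains only the arrowhead pattern in the limit.

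For $\extrahessian$, and hence for $A = -\tfrac{\var}{\truevar}\fim - \extrahessian$, I use that $\nabla^2_\misparam\missignal_t$ is already sparse: second derivatives across distinct harmonics vanish identically, so the only off-arrowhead candidate is the same-harmonic block $(\phase_k,\amp_k)$, whose entry is $\tfrac{2}{\var}\sum_t\real\!\big(\overline{\wavediff_t}\, i e^{i\phase_k+i\pitch_0 k t}\big)$. Its $\missignal$-generated self term ($\ell=k$) again has vanishing real part, the remaining $\missignal$-generated terms are cross terms of order $\mathcal{O}(1)$, and the $x$-generated terms reduce to sums $\sum_t \real\!\big(c\,e^{i(\pitch_0 k - \truefreq_m)t}\big)$ that are $\mathcal{O}(1)$ because the separation hypothesis of Prop.~\ref{prop:bounded_perturbation} keeps $\pitch_0 k - \truefreq_m$ bounded away from $0$. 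Thus the off-arrowhead entries of $\tfrac1N\extrahessian$ vanish as well, and $\tfrac1N A(\misparam_0)$ converges to an arrowhead matrix. The main obstacle is exactly this last bookkeeping: controlling the signal-versus-approximation cross sums uniformly requires the non-degeneracy (harmonic separation and $\pitch_0$ away from $0$) that prevents any residual offset $\pitch_0 k - \truefreq_m$ from shrinking like $1/N$, which would otherwise promote a cross term to matched order and destroy the arrowhead structure.
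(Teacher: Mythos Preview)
Your treatment of $\fim(\misparam_0)$ is correct and matches the paper: the Dirichlet-sum grading shows that off-diagonal entries of the $\misscale$-block are $\mathcal{O}(1)$ while the diagonal is $\Theta(N)$, so $\tfrac1N\fim$ becomes arrowhead in the limit.

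For $\extrahessian(\misparam_0)$, however, your argument has a genuine gap, and the paper takes a different and cleaner route. Your direct bound on the $(\phase_k,\amp_k)$ entry reduces to controlling $\sum_t e^{i(\pitch_0 k - \truefreq_m)t}$ for the $x$-generated pieces, and for $m=k$ the offset is $\freqdiff_k = k\pitch_0 - \truefreq_k$. You claim this is bounded away from zero by ``the separation hypothesis of Prop.~\ref{prop:bounded_perturbation}'', but that proposition concerns the CHS fundamental of Def.~\ref{def:omt_spectrum}, not the pseudo-true $\pitch_0$ of Def.~\ref{def:l2_pitch}; and even on its own terms it does not bound $k\pitch_0-\truefreq_k$ away from zero. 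In fact the pseudo-true $\pitch_0$ is \emph{constructed} to make these residuals small, and the paper's own discussion (end of Section~\ref{sec:l2_def}) shows that as $N$ grows $\pitch_0$ drifts toward some $\truefreq_{k_0}/\ell_0$, so a residual $\freqdiff_k$ can shrink with $N$. You yourself flag this as ``the main obstacle'' but do not resolve it.

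The paper avoids the issue entirely by exploiting that $\misparam_0$ is a stationary point of the $\ell_2$ criterion \eqref{eq:ls_approximation}, so $\sum_t\real(\overline{\wavediff_t}\,\partial_{\misparam_j}\missignal_t)=0$ for every $j$. Since each second derivative of $\missignal_t$ not involving $\pitch_0$ is a constant real multiple of some first derivative (e.g.\ $\partial_{\phase_k}\partial_{\amp_k}\missignal_t = \amp_k^{-1}\partial_{\phase_k}\missignal_t$ and $\partial_{\phase_k}^2\missignal_t = -\amp_k\,\partial_{\amp_k}\missignal_t$), every off-arrowhead entry of $\extrahessian(\misparam_0)$ is \emph{exactly} zero, not merely $o(N)$. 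This one-line observation replaces all of your $\wavediff$-bookkeeping and needs no hypothesis on the size of $\freqdiff_k$.
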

%
\begin{proof}
Firstly, it may be noted that as $\misparam_0$ solves the LS criterion in \eqref{eq:ls_approximation}, it directly follows that
\begin{align*}
	\sum_{t=0}^{N-1} \wavediff_t^\rreal \nabla_\misparam \missignal_t^\rreal + \sum_{t=0}^{N-1} \wavediff_t^\iimag \nabla_\misparam \missignal_t^\iimag = 0.
\end{align*}
Then, as any second derivative of $\missignal_t^\rreal$ and $\missignal_t^\iimag$ not involving differentiation with respect to $\pitch$ is equal to a common constant real scaling of elements of $\nabla_\misparam \missignal_t^\rreal$ and $\nabla_\misparam \missignal_t^\iimag$, respectively, only the first column and first row of $\extrahessian(\misparam_0)$ are non-zero.
It is straightforward to show that off-diagonal elements of $\fim(\misparam_0)$ not related to partial derivatives of $\pitch_0$ converge linearly to zero when scaled by $1/N$, whereas the diagonal is bounded from below by positive values. Thus, $\frac{1}{N} \fim(\misparam_0)$, and thereby $\frac{1}{N}A(\misparam_0)$, converges to an arrowhead matrix as $N \to \infty$.
\end{proof}
%
%
%
\subsection{Proof of Proposition~\ref{prop:bounded_perturbation}}
Clearly, if there are at least two consecutive sinusoids with non-zero amplitude, $d$ in Def.~\ref{def:max_harm_order} satisfies \mbox{$d \in \left[\tilde{\pitch}_0-2\norm{\inharm}_\infty,\tilde{\pitch}_0+2\norm{\inharm}_\infty\right]$}. Then,
\begin{align*}
	(K+1)d \geq \truefreq_K \;,\; (K-1)d < \truefreq_K,
\end{align*}
if $\norm{\inharm}_\infty < \tilde{\pitch}_0/(2K+3)$. Thus for $\norm{\inharm}_\infty < \tilde{\pitch}_0/(2K+3)$, we have that $L \in \left\{ K, K+1 \right\}$. Furthermore, for any $\pitch \in \left[\tilde{\pitch}_0 - \norm{\inharm}_\infty , \tilde{\pitch}_0 + \norm{\inharm}_\infty\right]$,
\begin{align*}
	\abs{k\pitch - \truefreq_k} \leq (k+1)\norm{\inharm}_\infty < \tilde{\pitch}_0 \frac{k+1}{2K+3},
\end{align*}
whereas $\abs{(k\pm1)\pitch - \truefreq_k} > \tilde{\pitch}_0 (2K+1-k)/(2K+3)$. Thus, $\argminwrt[\ell] (\ell\pitch-\truefreq_k)^2 = k$, for $k =1,\ldots,K$, implying
\begin{align*}
	q_{L}(\pitch) = 2\pi\sum_{k=1}^K \trueamp_k^2 (k\pitch - \truefreq_k)^2 = 2\pi\sum_{k=1}^K \trueamp_k^2 (k(\pitch-\tilde{\pitch}_0) - \inharm_k)^2
\end{align*}
for any $\pitch \in \left[\tilde{\pitch}_0 - \norm{\inharm}_\infty , \tilde{\pitch}_0 + \norm{\inharm}_\infty\right]$. This quadratic function has the unique stationary point
\begin{align*}
	\pitch_0 = \frac{\sum_{k=1}^K \trueamp_k^2 k \truefreq_k}{\sum_{k=1}^K \trueamp_k^2 k^2} = \tilde{\pitch}_0 + \frac{\sum_{k=1}^K \trueamp_k^2 k \inharm_k}{\sum_{k=1}^K \trueamp_k^2 k^2},
\end{align*}
where it may be noted that
\begin{align*}
	\abs{\frac{\sum_{k=1}^K \trueamp_k^2 k \inharm_k}{\sum_{k=1}^K \trueamp_k^2 k^2}} \leq \frac{\sum_{k=1}^K\trueamp_k^2 k}{\sum_{k=1}^K\trueamp_k^2 k^2} \norm{\inharm}_\infty \leq \norm{\inharm}_\infty.
\end{align*}
\hfill $\square$
\subsection{Proof of Proposition~\ref{prop:chs_estimate}}
We here assume that the inharmonic perturbations are small so that, asymptotically, i.e., when $N \to \infty$ or the SNR tending to infinity, the assumptions of Prop.~\ref{prop:bounded_perturbation} hold in the sense that there exist $\omega$ such that $|k\omega - \hat{\truefreq}_k | < \omega/(2K+3)$, for $k=1,\ldots,K$, almost surely. Noting that the covariance matrix of the vector $\hat{\theta} = \left[\begin{array}{cccccc} \hat{\trueamp}_1 & \ldots & \hat{\trueamp}_K & \hat{\truefreq}_1& \ldots & \hat{\truefreq}_K     \end{array}\right]^T$ is asymptotically given by (see, e.g., \cite{StoicaJL97_45})
\begin{align*}
	\text{Cov}(\hat{\theta}) \!=\! \frac{\truevar}{2N} \begin{bmatrix}   I& 0\\ 0 & C_2 \end{bmatrix} \:,\: C_2 \!=\! \frac{12}{N^2\!-\!1} \text{diag}\left( \left[ 1/\trueamp_1^2 \: \ldots \: 1/\trueamp_K^2\right]\right)
\end{align*}
and that the estimate of the CHS fundamental frequency is given $\hat{\pitch}_0 = f(\hat{\theta})$, where
\begin{align*}
	f(\theta) = \frac{\sum_{k=1}^K \trueamp_k^2 k \truefreq_k}{\sum_{k=1}^K \trueamp_k^2 k^2},
\end{align*}
the expression in \eqref{eq:chs_var} is the obtained from a first order Taylor expansion of $f$ at $\theta$. As can be seen from $\text{Cov}(\hat{\theta})$, the estimates of $\truefreq_k$ are asymptotically uncorrelated with estimates of the amplitudes, from which it directly follows that $\hat{\pitch}_0$ is an asymptotically unbiased estimate of $\pitch_0$. As $\hat{\theta}$ is the MLE of $\theta$, consistency follows, with the asymptotic variance being
\begin{align*}
	\expop\left( (\omega_0 - \hat{\omega}_0)^2 \right) = \nabla_{\theta} f(\theta)^T\text{Cov}(\hat{\theta})\nabla_{\theta} f(\theta).
\end{align*}
After some simplification, the expression in \eqref{eq:chs_var} follows.
\hfill $\square$
\subsection{Proof of Proposition~\ref{prop:hcrlb}}
We have
\begin{align*}
	\nabla_{\paraminharm}\log p(\by,\inharm;\theta) = \nabla_{\paraminharm}\log p(\by\mid\inharm;\theta) + \nabla_{\paraminharm}\log p(\inharm),
\end{align*}
where $\nabla_{\paraminharm}\log p(\inharm) = \text{diag}\left( \begin{array}{cc} 0^T & -\inharm^T/\inharmvar \end{array} \right)$, where $0$ is a zero vector of length $2K+1$. Further, as 
\begin{align*}
	\expop_{\by\mid\inharm}\left(\nabla_{\paraminharm}\log p(\by\mid\inharm;\theta) \right) = 0,
\end{align*}
and $\inharm$ is independent of the measurement noise,
\begin{align*}
	\expop_{\by\mid \inharm}\left( \nabla_{\paraminharm}\log p\; \nabla_{\paraminharm}\log p^T\right) = \fim(\paraminharm) + \begin{bmatrix} 0 & 0& \\0 & \frac{1}{(\inharmvar)^2}\inharm\inharm^T \end{bmatrix},
\end{align*}
where we use the shorthand $p = p(\by,\inharm;\theta)$, and 
\begin{align*}
	\fim(\paraminharm) &= \expop_{\by\mid\inharm}\left(\nabla_{\paraminharm}\log p(\by\mid\inharm;\theta) \nabla_{\paraminharm}\log p(\by\mid\inharm;\theta)^T \right) \\
	&= \frac{2}{\truevar}\!\sum_{t=0}^{N-1}\!\nabla_{\paraminharm} \missignal_t^\rreal(\paraminharm) 	\nabla_{\paraminharm} \missignal_t^\rreal(\paraminharm)^T\!+\!\nabla_{\paraminharm} \missignal_t^\iimag(\paraminharm) \nabla_{\paraminharm} \missignal_t^\iimag(\paraminharm)^T
\end{align*}
as the measurement noise is circularly symmetric white Gaussian. As $\expop_{\inharm}\left( \inharm\inharm^T \right) = \inharmvar I$, the expression for $\breve{\fim}$ follows directly. It may also be readily verified that
\begin{align*}
	 \expop_{\by\mid \inharm}\left(\nabla_{\paraminharm}\log p(\by\mid\inharm;\theta) \frac{\partial}{\partial \truevar} \log p(\by\mid\inharm;\theta)\right) = 0,
\end{align*}
i.e., the HCRLB for $\paraminharm$ does not depend on whether $\truevar$ is known or not, implying that no partial derivatives with respect to $\truevar$ need to be considered to compute the HCRLB of $\paraminharm$.
\hfill $\square$
\vspace{-2mm}
%
\bibliographystyle{IEEEbib}
\bibliography{ElvanderJ20_final_manuscript_TSP_arXiv,IEEEabrv}

\end{document}